\newcommand\vldbdoi{XX.XX/XXX.XX}
\newcommand\vldbpages{XXX-XXX}
\newcommand\vldbvolume{14}
\newcommand\vldbissue{1}
\newcommand\vldbyear{2020}
\newcommand\vldbauthors{\authors}
\newcommand\vldbtitle{\shorttitle} 
\newcommand\vldbavailabilityurl{URL_TO_YOUR_ARTIFACTS}
\newcommand\vldbpagestyle{plain}
\begin{document}


\title{One stone, two birds: A lightweight multidimensional  learned index with cardinality support}



%
%
%
%

\author{Yingze Li}
\affiliation{%
\institution{Harbin Institude of Technology}
\city{Harbin}
\state{China} 
}
\email{1190202126@stu.hit.edu.cn}

\author{Hongzhi Wang }
\affiliation{%
\institution{Harbin Institude of Technology}
\city{Harbin}
\state{China} 
}
\email{	wangzh@hit.edu.cn}

\author{Xianglong Liu }
\affiliation{%
\institution{Harbin Institude of Technology}
\city{Harbin}
\state{China} 
}
\email{1190200525@stu.hit.edu.cn}

\begin{abstract}

Innovative learning-based structures have recently been proposed to tackle index and cardinality estimation(CE) tasks, specifically learned indexes and data-driven cardinality estimators. These structures exhibit excellent performance in capturing data distribution, making them promising for integration into AI-driven database kernels. However, accurate estimation for corner case queries requires a large number of network parameters, resulting in higher computing resources on expensive GPUs and more storage overhead. Additionally, the separate implementation for CE and learned index result in a redundancy waste  by storage of single table distribution twice. These present challenges for designing AI-driven database kernels. As in real database scenarios, a compact kernel is necessary to process queries within a limited storage and time budget. Directly integrating these two AI approaches would result in a heavy and complex kernel due to a large number of network parameters and repeated storage of data distribution parameters.  \looseness=-1

Our proposed CardIndex structure effectively killed two birds with one stone. It is a fast multi-dim learned index that also serves as a lightweight cardinality estimator with parameters scaled at the KB level. Due to its special structure and small parameter size, it can obtain both cumulative density function (CDF) and probability density function (PDF) information for tuples with an incredibly low latency of $1-10 \mu s$. For tasks with low selectivity estimation, we did not increase the model's parameters to obtain fine-grained point density. Instead, we fully utilized our structure's characteristics  and proposed a hybrid estimation algorithm in providing   fast and exact results. Extensive experiments show that our CardIndex outperforms the state-of-the-art CE methods by 1.3-114$ \times $ on low  selectivity queries, while its inference latency on CPU and storage consumption is also 1-2 orders of magnitude smaller than these deep models on GPU. On index tasks, our CardIndex's index performance is $30\%$-$40\%$ faster in point queries and 4-10$ \times $ faster in range queries compared to traditional multi-dim indexes.

\end{abstract}

\maketitle

\pagestyle{\vldbpagestyle}
\begingroup\small\noindent\raggedright\textbf{PVLDB Reference Format:}\\
\vldbauthors. \vldbtitle. PVLDB, \vldbvolume(\vldbissue): \vldbpages, \vldbyear.\\
\href{https://doi.org/\vldbdoi}{doi:\vldbdoi}
\endgroup
\begingroup
\renewcommand\thefootnote{}\footnote{\noindent
This work is licensed under the Creative Commons BY-NC-ND 4.0 International License. Visit \url{https://creativecommons.org/licenses/by-nc-nd/4.0/} to view a copy of this license. For any use beyond those covered by this license, obtain permission by emailing \href{mailto:info@vldb.org}{info@vldb.org}. Copyright is held by the owner/author(s). Publication rights licensed to the VLDB Endowment. \\
\raggedright Proceedings of the VLDB Endowment, Vol. \vldbvolume, No. \vldbissue\ %
ISSN 2150-8097. \\
\href{https://doi.org/\vldbdoi}{doi:\vldbdoi} \\
}\addtocounter{footnote}{-1}\endgroup

\ifdefempty{\vldbavailabilityurl}{}{
\vspace{.3cm}
\begingroup\small\noindent\raggedright\textbf{PVLDB Artifact Availability:}\\
The source code, data, and/or other artifacts have been made available at \url{https://github.com/LiYingZe/CardIndex}.
\endgroup
}

\section{Introduction}

Learning-based architectures are offering novel prospects to traditional database problems~\cite{li2021ai}. The learned index methodology~\cite{kraska2018case} endeavors to acquire a function that associates the key of an indexed tuple with its cumulative density function (CDF), while the data-driven cardinality estimators~\cite{yang2019deep} aim to sample from the probability density function (PDF) approximated by deep autoregressive models. Due to their excellent performance in estimation quality and rapid index searching, they are expected to be integrated into the AI-driven database kernel~\cite{jasny2020db4ml}, creating a new generation of database products.

Even though these mechanisms give promising chances, embedding both of them into a single database brings challenges.
In practical databases, a compact and elegant database kernel is necessary for the efficient operation of the database system~\cite{paul1987architecture}. That is, query processing is required to be accomplished within a limited storage space and time budget. Although these learning-based strategies are excellent in learning data distribution, the existing methods are often too cumbersome to be applicable in the scenario of multidimensional data. For example, given a table with 80 million tuples. The autoregressive network proposed by Naru~\cite{yang2019deep} requires 10 hours of training, more than  100 MB  of space to store parameters, and at least $ 20ms $ of inference latency on an expensive GPU. These heavy structures demand  many training resources for a converged model, many computing resources for inference, and large memory overhead to store the distribution of tables.\looseness=-1

This dilemma  is difficult to solve. The reason is that, for learning-based tasks, if we want to obtain an accurate estimation of some corner cases, e.g.   some queries with small cardinalities, then we should pay the corresponding price. The  model parameter scale is increased to achieve higher prediction accuracy. Larger model parameters require more computing resources to train, higher computing latency to inference, and more space  to store. Meanwhile, considering that multi-dimensional learned indexes and CE tasks   learn  from the distribution of the same table, the information of a single table is stored twice via two different networks, which leads to redundant waste of space.\looseness=-1

Fortunately, this gives us chance to solve both indexing and CE problems for multi-dimensional scenarios      together with one lightweight structure. That is, a small model is used to learn the distribution of a single table and to act as both a learned index and a cardinality estimator. The parameters of the model are controlled at the KB level to achieve the $ \mu s $ level of inference latency on a cheap CPU than an expensive GPU. Given its tiny parameter size, it cannot achieve very accurate point density estimation. This causes poor performance in directly estimating some small cardinality queries, and it needs multiple auxiliary refining structures to accurately pinpoint the index address. However, for these low selectivity queries, it is much faster to use index execution directly to obtain accurate cardinality values compared to those sampling methods. For example, for queries with a result size of 100-1000, it only takes about $50 \mu s $ to search the precise cardinality directly via the index on CPU, and $ 20$-$200ms  $ to obtain an approximate result for a progressive sampling using autoregressive models on GPU.

Above analysis inspires us to construct a general lightweight structure for these two tasks of index learning and CE, which not only plays as an efficient multidimensional index but also stands as a lightweight cardinality estimator. It can efficiently identify the query size and select the appropriate estimation strategy. For  estimation over high selectivity, it uses the progressive sampling strategy for estimation. For small selectivity estimation tasks, it directly uses the index structure to obtain the exact cardinality value in less time.

\begin{figure}
\centering
\includegraphics[height=3cm]{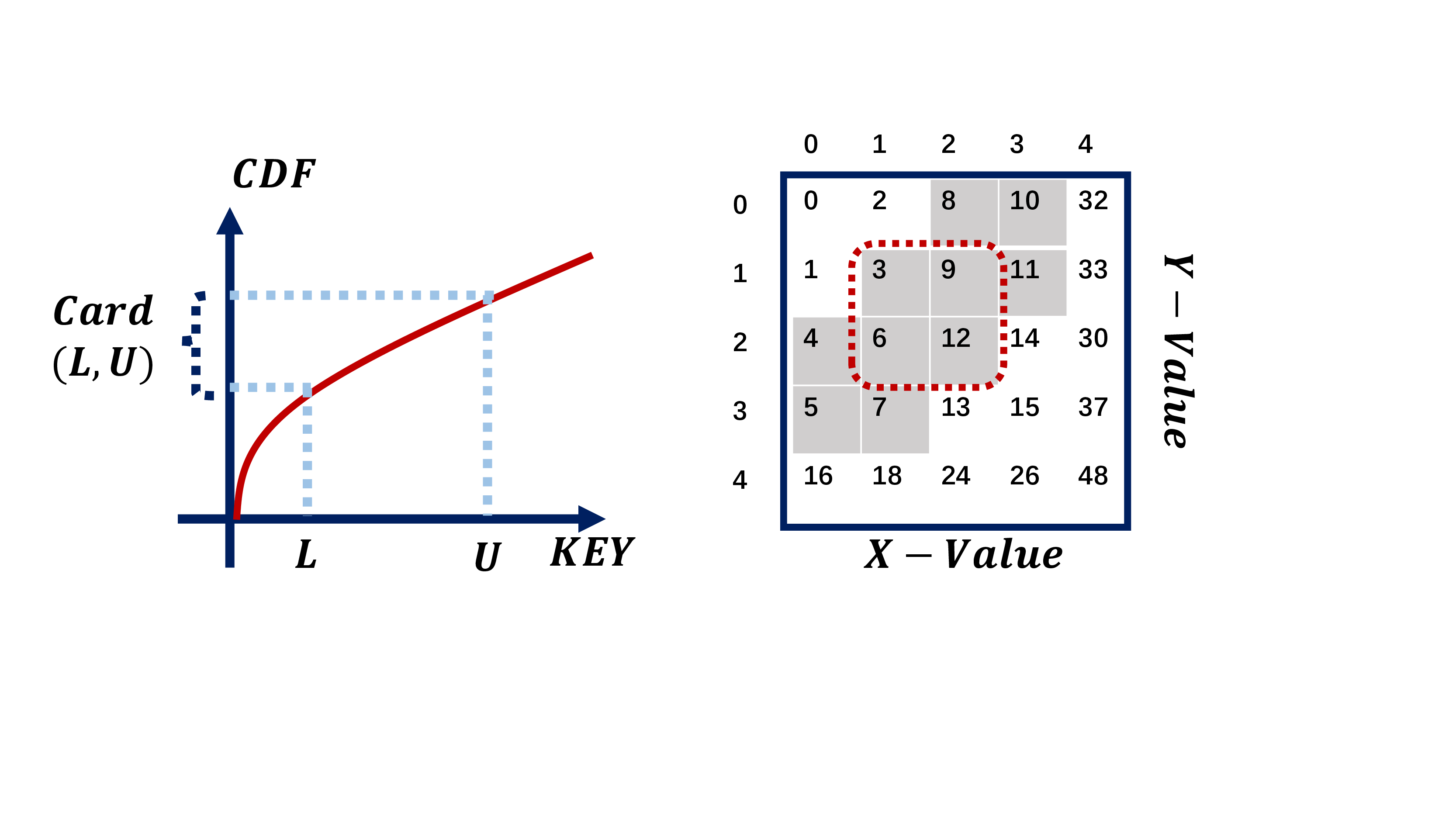}
\caption{ In a single dimension, a simple subtraction is sufficient for transforming an index into a cardinality estimator. However, multi-dim cases are more difficult. Directly using subtraction in Z-Order will result in a larger estimation error.  }
\label{F1}
\end{figure}

The discussed structure is intuitive for one-dimensional data, but direct implementations are not scalable to multi-dimensional situations. Ji et al~\cite{ji2022performance}.  modified the learned index to achieve CDF estimation on both sides of the one-dim query through twice point queries. By subtracting, the estimation of the one-dimensional cardinality is obtained. However, these efforts are not suitable for multi-dimensional cases, where the problem is much more complex. As demonstrated in Fig~\ref{F1}, using the Z-Order curve directly as a dimensionality reduction bridge leads to greater estimation errors, as not all records between $Z_{min}$ and $Z_{max}$ will be contained in the query box. Therefore, the barrier exists in utilizing multi-dimensional CDF in CE tasks.\looseness=-1

To achieve the goal on multi-dimensional data, we crossed the barrier. Our solution derives a unified structure that can obtain CDF and PDF information simultaneously under only a single inference step. CDF information can be used for index queries, while PDF information can be used for progressive sampling estimation. Therefore, our proposed CardIndex structure effectively killed two birds with one stone. In the case of multidimensional index tasks, it can be utilized as a learned index to swiftly conduct point queries with an average latency of $1\mu s$  and the performance of range query is not inferior to the traditional structure. In the case of cardinality estimation, only a few computations on the CPU are required to achieve close estimations with a similar latency compared to mainstream deep estimators on GPU. Moreover, our proposed hybrid estimation  method  brought us a chance to get the precise result of low selectivity queries. Specifically, our proposed CardIndex structure faces the following challenges: \looseness=-1

\textbf{C1: Data organization:}  How can multidimensional data be organized so that multidimensional CDFs can be associated with sorted data addresses to efficiently support multidimensional queries and estimation tasks?

\textbf{C2: Model Implementation:} How can a learning-based model be implemented to replace traditional database index structures, such as KDB tree and R tree, while maintaining low latency? 

\textbf{C3: Algorithm design:}  What methodology can be utilized in inferring this unified structure to support both index processing and cardinality estimation? Furthermore, how can the precision quality of the model be preserved in problem resolution, particularly when dealing with limited model parameters?

We have addressed these challenges, and given the following contributions as follows:

1. We propose the CardIndex structure, a lightweight multidimensional  learned index that learns the Z-Order bit encoding distribution of data through the autoregressive model. The learned parameters of our structure are in KB scale. Thus, our CardIndex can perform one fast inference under $ 1-10\mu s $ to output the  CDF and PDF information required by Index and CE. We designed algorithms for point query, range query, and CE tasks, making this the first attempt to solve both CE and multidimensional indexing tasks in a single learned structure.

2. For the indexing task, we developed two approaches to improve the performance, Pre-$ k $ neuron linkage and Linear Refinement. The former effectively shortens the neuronal dependency complexity, making the structure more scalable on Z-Ordered bit inference length. The latter refines the approximated CDF estimation of deep structure through cheaply linear functions. Thus we could obtain a more accurate address through the index, which makes query processing more efficient.

3. For the cardinality estimation task, we propose the hybrid estimation method. We first use several point index queries to probe the upper bound of the cardinality selectivity, namely, our FastCDFEst algorithm. This helps us distinguish whether the   cardinality of the query is large or small. We only use progressive sampling of our deep model to estimate large cardinality, while for the estimation of small cardinality, we directly use the multidimensional index part to achieve precise estimation in much less time.

4. We compared our cardinality estimation algorithm with the state-of-the-art single table estimators and multidimensional indexes on multiple workloads of real and synthetic data. Experiment results show that our hybrid estimation algorithm improves estimation performance by  1.3-114$  \times  $  in scenarios with extremely small cardinality queries. The model size and inference time of our method on CPU is also 1-2 orders of magnitude smaller than current deep estimators on GPU. Experiments also show that our index is 30-40\% faster on point query tasks and  4-10$ \times $ faster than traditional architectures under certain workloads un range queries.

The remaining sections of this paper are structured as follows. Section 2 reviews related works. Section 3 introduces the fundamental concepts of our approach. Section 4 outlines the underlying structure of our index, and Section 5 provides a detailed description of the algorithm used for estimation and index queries utilizing our CardIndex. In Section 6, we demonstrate the effectiveness of our methodology through extensive experimentation. We conclude our work in Section 7.  \looseness=-1

\section{RELATED WORK}

\label{Sec2}

\textbf{Z-Order Curve}: A space-filling curve that maps multidimensional data to one-dimensional~\cite{tropf1981multidimensional}. It is also good in locality~\cite{dai2003locality}, meaning that the adjacent Z-Order value tuples tend  to be similar in many attribute values. As a result, it is widely used in the implementation of the  multidimensional index, such as UB-Tree\cite{ramsak2000integrating}. The easiest approach to using the Z-Order curve in a range query is to calculate the Z-values of the top left $ Z_{min} $ and bottom right $ Z_{max} $ endpoints of the query box and scan for all the data between them~\cite{8788832}. However, many elements between $ Z_{min} $ and $ Z_{max} $ are not in the query box. Therefore, discontinuous gaps arise  between $ Z_{min} $ and $ Z_{max} $. To skip these gaps, tropf et al~\cite{tropf1981multidimensional}  proposed the   \textit{getBIGMIN}  method. Given a tuple $ t $ outside the query box,  \textit{getBIGMIN}  shall ``jump" to the minimum Z-value greater than $ t $ in the query box within $ O (n) $ time where $ n $ is the bit encoding length. The counterpart of the above method is the  \textit{getLITMAX}  method, which can get the maximum Z-value in the query box less than the tuple $ t $ in $ O (n) $ time.   \looseness=-1

\textbf{Learned Index}: By using a learned function that maps a search key to its address, the learned index achieves point query in constant time. Compared with traditional index structures, this approach avoids searching in a log-scale depth tree index. Under one-dimensional cases, RMI~\cite{kraska2018case} first utilized this idea by using MLPs to map the search key to its CDF. After RMI, many newly learned indexes sprang up. For example, the Fitting-Tree~\cite{galakatos2019fiting} structure fits data with piecewise linear functions and gives bound guarantees for data coordinates. Meanwhile, a lot of work has been done on learned index in multidimensional scenarios. Tsunami~\cite{ding2020tsunami} takes the organization structure of grid files and uses the tuned grid partitioning to organize the index. RSMI~\cite{qi2020effectively} and ZM~\cite{8788832} are similar to our approach of indexing multidimensional data under the Z-Order space-filling curve, but it does not take advantage of Z-Order’s optimization to skip the discontinuous regions.

\textbf{Data-driven CE}: Through the data distribution learned from the model, we can estimate the  cardinality of the query. There are two kinds of representative   genres in data-driven methods:  (1) SPN (Sum Product Network)~\cite{hilprecht2019deepdb}:  SPN recursively performs horizontal and vertical decomposition of the table through a KD tree-like space partition strategy. It uses sum nodes to combine different rows as clusters. The multiplication node is used to decompose the attribute group with weak column correlation. (2) Autoregressive model~\cite{yang2019deep}: Through the strong fitting ability of deep learning networks, deep autoregressive models learns the joint distribution between data attributes, thus achieving good point density estimation. The representative method is Naru, which achieves a good estimation of query distribution through a progressive sampling algorithm. This algorithm belongs to a special branch of Monte Carlo integrated sampling, which makes the sampled model concentrate on some attribute columns with higher density and saves the sampling times.


\section{PRELIMINARIES }

In this section, we shall commence by presenting fundamental notions outlined in \S~\ref{Sec3.1}. Subsequently, we shall derive the key concepts of indexing and estimation by utilizing the aforementioned fundamental notions in conjunction with the bit joint distribution discussed in \S~\ref{sec 3.2}. Lastly, we will deliberate on the insights about the information our model has learned from both the cumulative distribution function (CDF) and probability density function (PDF) perspectives, as called upon in \S~\ref{sec 3.3}.

\label{Sec3}
\subsection{Basic Concepts}

\label{Sec3.1}
For a relational table $ T $ on $ N $ tuples, and $ m $ attributes $ \{     A_1, A_2 \dots A_m \} $, each of which is encoded by $ b $ binary bits, i.e. $ A_i = D_{i1} D_{i2} \dots D_{ib} $. For a query predicate: $ \theta  :A_1 \times A_2 \dots   \times A_m \rightarrow \{0,1\} $, tuples satisfying predicate  in table $ T  $ forms a result set $ R_0=\{ t \in T:\theta(t)=1\} $.  {The index demands an auxiliary structure in accelerating the process of getting the result set $ R_0  $.  And CE requires the cardinality estimation of the result set, i.e., $ card(\theta) = | R_0 | $.}

The joint distribution $ P(t)=o(t)/N $ of the tuple $ t $ is related to the query selectivity $ sel(\theta) $, where $ o(t)  $ is the number of occurrences of tuple $ t $ in table $ T $. Therefore,

\begin{equation}
sel(\theta) = \sum_{t\in A_1 \times A_2 \times \dots A_m} \theta(t) \cdot  P(t)
\end{equation}

Meanwhile, if data ordering satisfies the bit ordering property $ L= \left\{ D_{x1}, D_{x2}... D_{xb}  \right\}   $,  ordering relationship  $ t_2 < t_1  $ is satisfied, if and only if: $  \exists k: s.t. t_2.D_{xk} < t_1.D_{xk}  $ and $ \forall j < k: t_2. D_{xj} = t_1. D_{xj} $. In this particular ordering, the tuple address distribution $ CDF(t) $ is also closely related to the   distribution of tuples $ P(t) $ :\looseness=-1

\begin{equation}
CDF(t)= \sum_{t_i <  t} o(t_i)/ N =  \sum_{t_i <  t}  P(t_i)
\end{equation}

$ L_{Lex} $ is lexicographical ordering, when $ L_{Lex}= \{ D_{11}, D_{12} , \dots D_{1b},$  $  D_{21}, D_{22}, \dots D_{2b} \dots D_{m1}, D_{m2},\dots D_{mb} \}   $.

$  L_Z $ is Bitwise entanglement ordering(Z-Order), when $ L_{Z}=  \{D_{11},\\ D_{21}, \dots D_{m1}, D_{12}, D_{22}, \dots D_{m2} \dots D_{1b}, D_{2b},\dots D_{mb} \}  $.

To process multidimensional range index queries more efficiently, we convention the  bit ordering as Z-Order $ L_Z $   and adopt the following notations: We   use $ n $ to replace $mb $ in denoting the total coding length, set the ordering notation as  $ L_Z = \left\{ Z_1,Z_2,\dots Z_{n} \right\}   $, tuple $ t   $ under Z-Ordering is denoted as $ t =(z_1,z_2\dots z_{n} )$. 

\subsection{Bitwise Joint Distribution}

\label{sec 3.2}
We note that for a given Z-Ordering $ L_Z$, the tuple distribution $ P(Z_{ 1},Z_{ 2}... Z_{ n}) $, can naturally take the following decomposition: $  P(Z_{ 1},... Z_{ n}) = P(Z_{ 1}) \cdot   P(Z_{ 2}|Z_{ 1}) \dots  P(Z_{ n}|Z_{ 1},Z_{ 2},\dots Z_{ n-1})$.

Therefore, it is possible for us to use the deep autoregressive model to estimate the above decomposed  probability function. We will set the chain of autoregressive reasoning as the Z-Order $ L_Z $, in predicting:$ \{  \hat{P} (Z_{ i}|Z_{ 1},Z_{ 2}, \dots Z_{i-1}) \} $.  {That is, the  model takes the given Z-encoded tuple: $ t =  (z_{1},z_{2},\dots z_{ n}) $ as the input, and gets $   n $ conditional probability estimations:$ \{ \hat{P} (z_{ i}|z_{ 1},z_{ 2}, \dots z_{i-1}) \} $ as the output.}

Based on above input and output of the model, we use the cross entropy loss in calculating the distance between the predicted and ground truth Z-Ordered binary sequences:
$ H(P,\hat{P}) = -\sum_{t\in T} P(t)log(\hat{P}(t)) $. Gradient descent is used to train the model. \looseness=-1

Through the learned Z-Order distribution, we implement the general idea for point queries. Since the point index query of tuple $t = ( z_1,z_2\dots z_{n} )$ needs to estimate all the tuples that come ahead of it: $\sum_{t_{i}  <  t}  o(t_i)$. Directly traversal of  each element in table $ T $ to calculate   the above sum  is  unwise, which has a complexity is $ O(N) $, where $N$ is the cardinality of the entire table.  

For any table $ T $ sorted by Z-Ordering, any tuple $ t_y $ smaller than $ t_x  $ must satisfy that $ \exists i ,  t_y.z_{i}=0 ,  t_x.z_{i}=1 $ and $ \forall j < i: t_y. z_{j} = t_x. z_{j} $.
Therefore, to index tuple $t_x$, all records in table $T$ smaller than $t_x$ can be virtually regrouped based on the longest common bit string length starting from bit position 0. Specifically, let $G(t_x,i) = \{ t_y \in T , \forall j<i, t_y.z_j = t_x.z_j  ,  t_y.z_i < t_x.z_i \}$.
As a result, the CDF of $t_x$ can be calculated according to the formula: $ CDF(t_x) = \sum_{i\leq n} {\left| G\left(t_x,i\right)\right| } /{N}$. Note that the conditional probabilities learned by our model correspond exactly to the cardinals of these groups, where $ Pr(Z_{i}=0|Z_{1}, \dots, Z_{i-1} = z_1,\dots, z_{ i-1}) \times z_{ i}$ corresponds to $ {|G(t_x,i)|}/{N} $. Thus, we can use the following conditional probability sum expression in  obtaining the index address distribution. \looseness=-1

\begin{equation}
CDF(t) = \sum_{i\leq n} Pr(Z_{ i}=0|Z_{1} \dots  {Z_{i-1}} = z_1\dots z_{ i-1}) \cdot z_{ i}
\end{equation}

Considering the accuracy of the model and the computational efficiency, we truncate the above summation and use the following approximation to calculate only the first $ \ell $ terms of the sum formula:

\begin{equation}
CDF(t) \simeq \sum_{i \leq \ell} Pr(Z_{ i}=0|Z_{1} \dots  {Z_{i-1}} = z_1\dots z_{ i-1}) \cdot z_{ i}
\end{equation}

This sum can be computed in $ O(\ell) $ time, where $ \ell $ is a fixed constant that represents the first $ \ell $ representative bits. For example, we take the first 32 bits of Z-Order encoding in our experiments.

\subsection{{Insights From Index And CE Perspective}}

\label{sec 3.3}
The distribution acquired by our model through Z-Order can be modeled as a probabilistic transition tree, which can be visualized in Figure~\ref{F2}. Under the Z-Order bit sequence, the output of our autoregressive model is equivalent to a probabilistic transition path from the root node to the leaf node in the tree. These paths can be invoked in two ways, depending on the usage requirements:

\begin{figure}
\centering
\includegraphics[width=9cm]{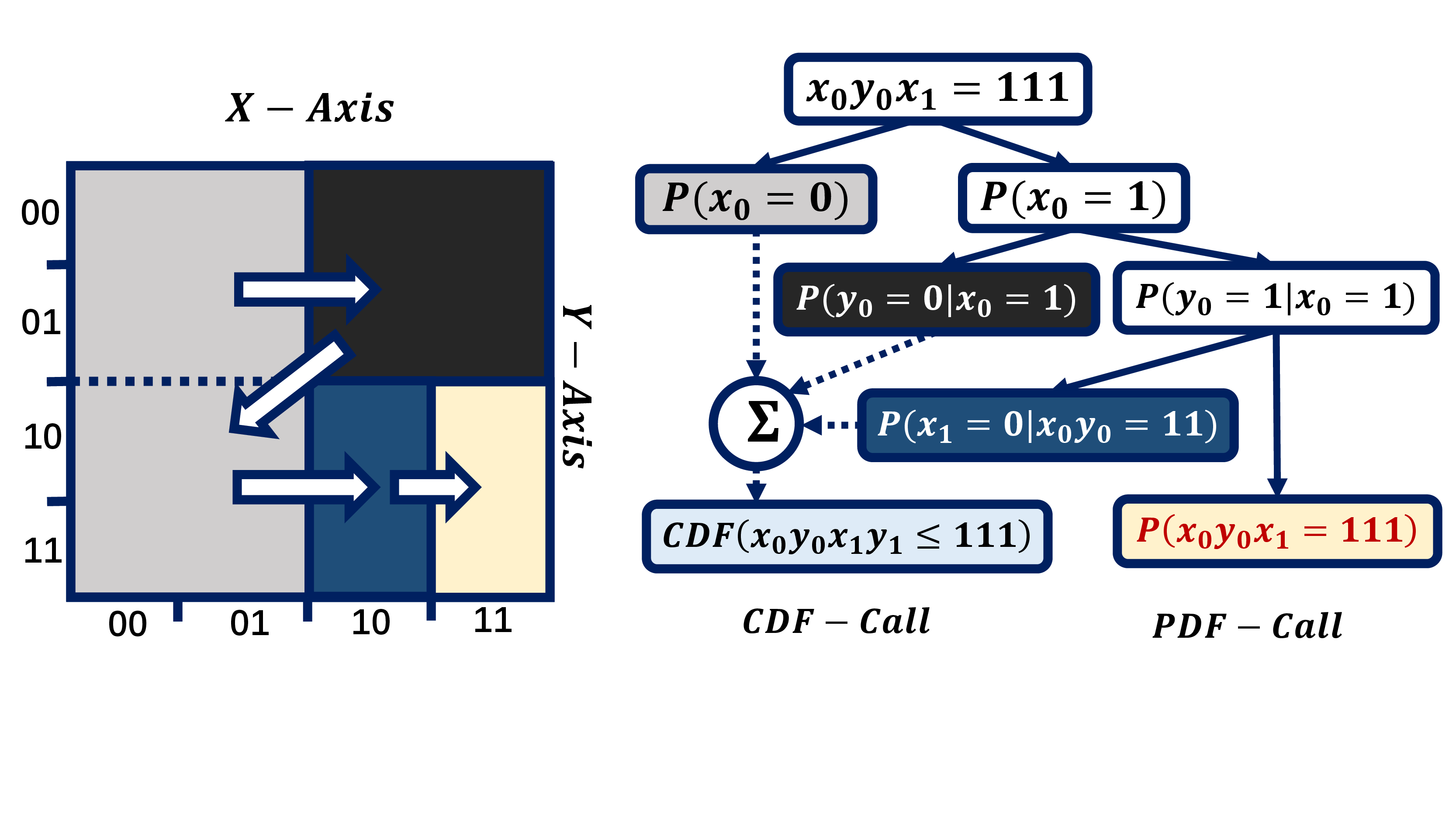}
\caption{What Our structure learns}
\label{F2}
\end{figure}

\textbf{1. CDF Call}: This method returns the summation of probabilities for all child nodes' sibling nodes with a bit value of 1.

\textbf{2. PDF Call}: This method directly outputs the probability density value that corresponds to the end of the Z-Order path.

From the perspective of the multidimensional index, the above-mentioned CDF call can be perceived as a traversal of the Z-Ordered space partitioning tree. Each $ w $ bit output by the model is equivalent to a visit to an internal node of the KDB tree. By summing up the probability value corresponding to these $ w $ bits, the model can directly predict the corresponding position of the leaf node.

From the perspective of the cardinality estimator, the above-mentioned PDF call achieves the acquisition of point density estimation. By employing some existing sampling techniques, such as progressive sampling, we can effectively utilize this point density estimation information for the cardinality estimation of the query.

\section{CardIndex Structure}

In this section, we propose the basic hierarchical structure of our model, as illustrated in Fig~\ref{F3}. At the highest level, a nimble Z-Ordered autoregressive network is utilized to perform two sub-tasks, namely indexing and estimation, achieved by invoking CDF and PDF calls. The remaining nodes refine the CDF value produced by the root using linear functions to achieve precise address information. We discuss the layout and optimization techniques of our model in \S~\ref{Sec 4.1}  and introduce our implementation details in \S~\ref{Sec 4.2}. 

\begin{figure}
\centering
\includegraphics[width=9cm]{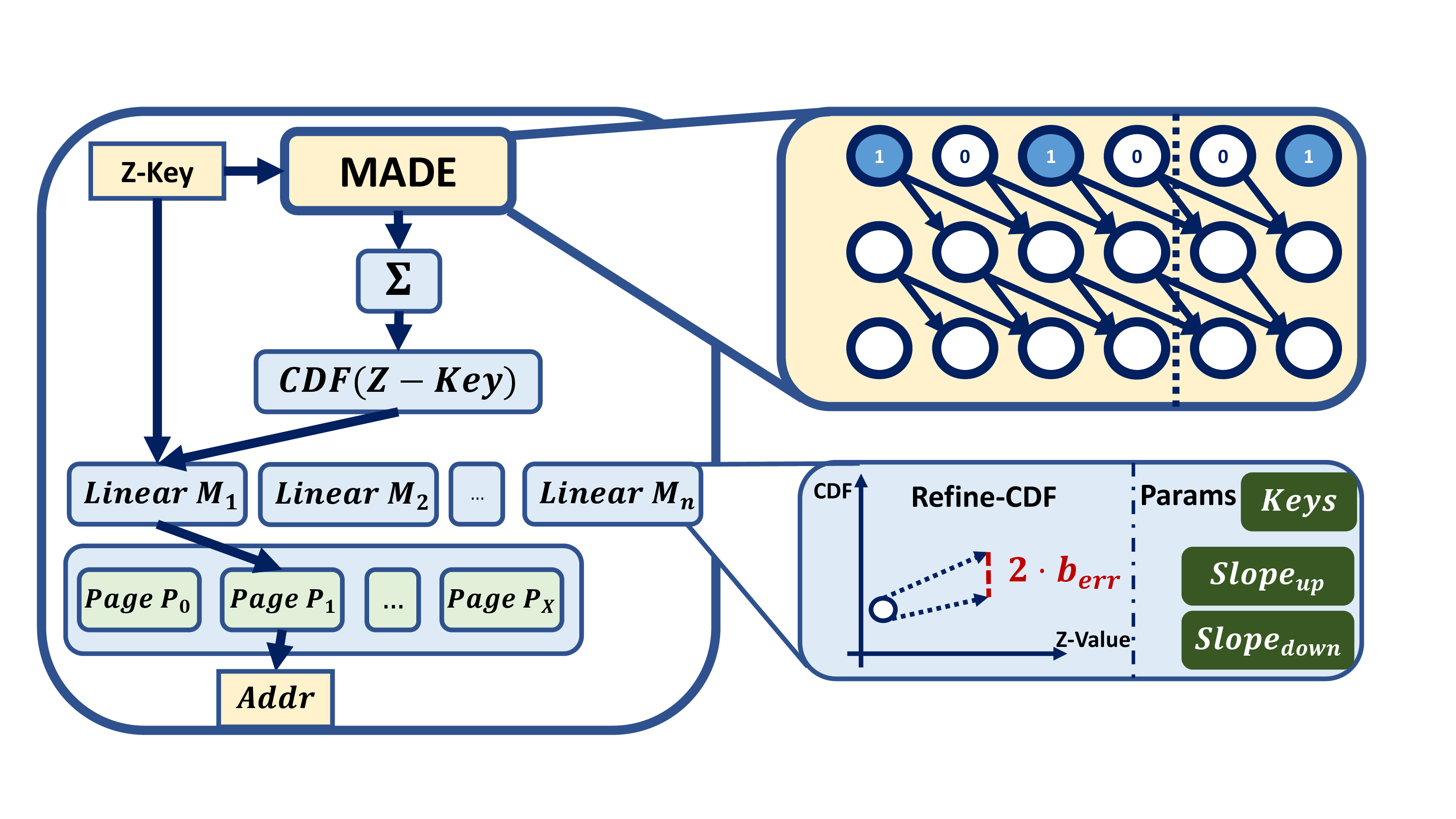}
\caption{ The layout of CardIndex }
\label{F3}
\end{figure}

\subsection{Layout}
\label{Sec 4.1}

In principle, any deep autoregressive model can serve as the CDF estimate structure for the root. However, for the sake of efficient inference, we employ a three-layered   MADE~\cite{germain2015made} as our autoregressive structure.  Given that   both index and CE tasks require low inference latency, and the index task  requires a precise tuple address of a given query, we employ two optimization approaches from the aspect of shortening the dependency complexity of the deep model and finding cheaper functions to decrease the estimated CDF  error.

\textbf{Pre-$ k $ neuron linkage:} To achieve scalability of binary coding length, we use limited mask connection of neurons to establish the conditional probability information fitting between bits. In detail, we adopt the following conventions for mask settings. The $i$-th neuron of the current layer only depends on the output of the $ (i-k) $-th to $ (i-1) $-th neurons of the previous layer. The upper-right corner of Fig~\ref{F3} illustrates the case when $ k=2 $. This method yields better scalability on input coding length. In contrast, for the original MADE network, the $i$-th neuron of the current layer depends, in the worst case, on the output of all the neurons of the first $ (i-1) $-th output of the previous layer. This results in a complexity of scale $O(n^2)$ for the calculation of the network between layers, where $n$ is the length of the encoded bits. Our approach results in a computational complexity of each layer being $O(n\times k)$. This feature enables us to calculate density functions quicker and   index data faster. Although this setting may reduce the performance of the model in  point density estimation to some extent, it will not affect the cardinality estimation performance of our structure too much.  This is because estimations that engage with large cardinalities do not require particularly fine-grained point density estimations. For small cardinality estimations, as we will see in \S~\ref{sec:CE}, can be delegated to the index rather than the inaccurate sampler to get precise cardinality values efficiently. \looseness=-1


\textbf{Linear Refinement:} After using the deep network to nonlinearly fit the  data distribution of first $\ell$ bits, we use a more cheaply calculated linear function to refine the CDF of the data. We  take the fully Z-Order coded key as the input and the CDF address of the index as the refined output. Our linear interpolation   process is similar to the construction of the works of FITing tree~\cite{galakatos2019fiting}. By maintaining the maximum error limit $ b_{err} $ for linear interpolation when building the tree, two sets of linear parameters are obtained to ensure the refined CDF of each tuple will be sandwiched between a region with the maximum estimated error within $ b_{err} $. 
In deployment, given the stored starting point and its two linear boundary slopes, we can bound the CDF value of the  required key in an interval predicted from these two functions satisfying that the interval length does not exceed $ 2\times b_{err }$. 
Since we want to scan as few tuples as possible during point queries execution, when constructing the tree, we can assign the $ b_{err} $ less than   half of the CDF capacity of a single data page. In this way, if a miss occurs on performing an index point query on a certain page, we only need to expand the search of its adjacent page.

\subsection{Implementations}

\label{Sec 4.2}

Directly implementing our model with the existing deep learning framework is inefficient, considering these frameworks tend to be designed for optimizing large models, resulting in significant invocation overhead which makes them infeasible for tasks like indexing~\cite{kraska2018case}. At the same time, the framework used in  RMI is designed for a fully connected network under single-dimensional indexing task, which is not compatible with our autoregressive network structure. Therefore, in order to accelerate the overall training and reasoning process of our structure, we adopted the following   implementation ideas. \looseness=-1

We first train our model using the PyTorch framework~\cite{AdamPaszke2019PyTorchAI} on GPU, then set up the tree structure in the Python environment~\cite{Python} and output it to disk.  The inference component implemented via C++ reads the trained structure into memory and uses the CPU for network reasoning to achieve the two tasks of CE and indexing.
\looseness=-1


Our structure  is established with the following steps. First, we train the autoregressive model and use it as our root to calculate the Z-Order CDF values for all tuples. Then, we sort the data with the CDF value as the primary key. Given the number of second-level child nodes $ n $, the tuples with CDF values at $\left(\left(i-1\right)/n,i/n\right)$ are grouped into the $i$-th ($i<n$) linear sub-model. Finally, we maintain the maximum error bound $ b_{err} $ in linear refinement and calculate the slopes and intercepts of linear functions within each sub-model. 
\looseness=-1

\section{CardIndex Inference}

In this section, we attempt to use the trained CardIndex to solve the task of index query and cardinality estimation. We propose the processing algorithm of point and range query in \S~\ref{sec 5.1} and \S~\ref{sec 5.2}, respectively. The Naive Hybrid CE algorithm is proposed in \S~\ref{sec:CE}, while its improvements and properties are discussed in \S~\ref{sec FCE}.

\subsection{Point Query Processing}

\label{sec 5.1}

{In \S~\ref{sec 3.2} and \S~\ref{sec 3.3}, we derived that our CardIndex structure can simulate complex inner node traversal in the space partition tree by summing up the following conditional probability sequence:

\begin{equation}
	CDF(t) \simeq  \sum_{i \leq \ell} Pr(Z_{i}=0|Z_{1} \dots Z_{i-1} = z_1\dots z_{i-1}) \cdot z_{i}
\end{equation}

Considering that the parameter scale of our model is too tiny and we truncate the above sum from $n$ to $\ell$, there are errors in the above approximation. In \S~\ref{Sec 4.1}, we introduce the Linear Refinement strategy of using two sets of linear functions to refine the point query error. So that the CDF estimation of each record can be bounded between the interval sandwiched by two linear equations. Therefore, the flow of our algorithm is as follows. We first perform a CDF call to sum up the conditional probability sequence of the model. This replaces the complex node traversal process in the space partitioning tree of the traditional index structure. Then the obtained CDF value is refined through two linear functions to obtain the exact data page storing the particular record.

\begin{algorithm}[htb]
	\caption{ Point Index Query.}
	\label{alg:piq}
	\begin{algorithmic}[1] 
		\Require
		CardIndex Root, $M: Key \rightarrow CDF$;
		Linear submodel $S: CDF \rightarrow \{l_0,l_1,x,y\}$;
		Z-Order encoding of the queried tuple, $ Z = z_1 z_2 \dots z_n $;
		\Ensure
		The Address Pointer of the tuple: $Addr_t$;
		\State $ Acc = 1 $;
		\State $ CDF = 0 $;
		\State $ p_1,p_2\dots p_\ell = M(z_1,z_2,\dots z_\ell) $
		\For { $  i  =1,\dots  \ell $ }
		\State $ CDF = CDF + z_i \cdot (1-p_i) $;
		\State $ Acc  =Acc \cdot \left(\left(z_i \cdot p_i \right)+ \left(1-z_i \right) \cdot \left(1-p_i \right) \right) $;
		\EndFor
		\State $ l_0,l_1,x,y = S[CDF] $
		\State $ cdf_\ell =  l_0\cdot(Z-x) + y$
		\State $ cdf_u =   l_1\cdot(Z-x) + y$
		\State $ Page = FindPage(cdf_\ell,cdf_u) $
		\State $ Addr(t) = ScanPage(Page)  $
		\If{$ Addr(t) = NULL $}
		\State $ Addr(t) = AdjScanPage(Page)  $
		\EndIf
		\\ 
		\Return $Addr(t)$; 
	\end{algorithmic}
\end{algorithm}

To begin, the algorithm initially calculates the approximated CDF value in lines 1-6. It acquires the parameters of a linear function by utilizing the approximated CDF in line 7 and then uses them to locate the page where the tuple is stored(lines 8-10). Finally, the algorithm scans the data on the page in lines 11-14 to obtain the exact tuple address. If the tuple is not found on this page, the search is extended to adjacent pages. The complexity of the algorithm is $O(max( \ell, B))$, where $\ell$ denotes the inference length of our structure $M$, and $B$ is the size of the leaf page. Given that the page size and inference length are fixed, we can conclude that the above algorithm operates in constant time.

\subsection{Range Query Processing}

\label{sec 5.2}

\label{sec Range Query Processing}
The processing flow of our CardIndex structure for range queries resembles the range queries processing algorithm in Z-Ordered Index structures, such as UB-Tree\cite{ramsak2000integrating}. We obtain the two boundaries $Z_{min},Z_{max}$ of the query hyperrectangle $Q$ under Z-Order encoding. Utilizing the monotone ordering property in Z-Ordering, any tuple $t_i$ within the query box satisfies: $Addr(Z_{min}) \leq Addr(t_i) \leq Addr(Z_{max})$. Thus, we can obtain the page pointer \textit{ptrStart, ptrEnd} from the top-left and bottom-right corners by performing point queries twice, and then search within the sandwiched pages between pointers  \textit{ptrStart}  and  \textit{ptrEnd} .

However, many records between $Z_{min}$ and $Z_{max}$ may not  exist in the final result, as illustrated in Fig~\Ref{F4}. Therefore, blindly traversing all the pages between them will result in excessive scanning overhead. To address this issue, we utilize the  \textit{getBIGMIN}  method proposed by Tropf~\cite{tropf1981multidimensional}. If all the elements in the current scanning page are not within the query box, the  \textit{getBIGMIN}  method can obtain the next smallest Z-Order value contained in the actual query result page in $O(n)$ time, where $n$ is the Z-Ordered bit length. This feature allows us to skip incoherent pages in a Z-Order interval. The counterpart to the  \textit{getBIGMIN}  method is the  \textit{getLITMAX}  method, which obtains the previous largest Z-Order address containing the actual query results in $O(n)$ time. We will employ the  \textit{getLITMAX}  method in \S~\ref{sec FCE}. To conclude, the pseudocode description of our method is as follows:  \looseness=-1

\begin{algorithm}[htb]
	\caption{Range Index Query.}
	\label{alg:RIQ}
	\begin{algorithmic}[1] 
		\Require
		CardIndex Root, $M: Key \rightarrow CDF$;
		Linear submodel $S: CDF \rightarrow \{l_0,l_1,x,y\}$;
		QueryBox,$ Q $
		\Ensure
		The Result set: $R$;
		\State $ Z_{min},Z_{max} = ZEncoding(Q)$;
		\State $ ptrStart = PointQuery(M,S,Z_{min}) $;
		\State $ ptrEnd = PointQuery(M,S,Z_{max}) $;
		\State $ R = \phi $
		\State $ ptrSearch  =  ptrStart $
		\While {$ ptrSearch.zmax \leq ptrEnd.zmax  $}
		\State $ R_i = leafSearch(ptrSearch,Q) $
		\If{$ R_i = \phi $}
		\State $ NextZ = getBIGMIN(Q,ptrSearch.zmax)$
		\State $ ptrSearch = 	PointQuery(M,S,NextZ) $
		\Else
		\State $ R =R \cup  R_i $
		\State $ ptrSearch = ptrSearch.succ $
		\EndIf
		\EndWhile
		\\
		
		\Return $R$; 
	\end{algorithmic}
\end{algorithm}

In the first step of the algorithm, point queries are executed twice in lines 1-3 to obtain the pointers  \textit{ptrStart}  and  \textit{ptrEnd} , which correspond to the data pages containing the upper left and lower right endpoints of the query hyperrectangle. Then, in lines 6-13, the scanning pointer  \textit{ptrSearch}  traverses through the data space sandwiched between  \textit{ptrStart}  and  \textit{ptrEnd}, scanning the data pages one by one. If there is no intersection between the data page and the query box(line 8), the  \textit{getBIGMIN}  method is used to combine another point query in skipping to the next relevant page(lines 9-10). If an intersection exists between the page and the query box, the page is scanned to find all the elements in the query box, and  \textit{ptrSearch}  is updated to  \textit{ptrSearch.succ}, which points to the next adjacent page(lines 11-13).

We claim that the worst-case complexity of the range query algorithm above is $O\left(\left(B+n\right)\times\left(n_I\right)\right)$, where $B$ is the size of the page, and $n_I$ is the number of data pages intersecting the query box. This is because the length of the tuple after Z-Order encoding is $n$ bits,  \textit{getBIGMIN}  method with time complexity $O(n)$ is called $ n_I $  times,  and we scanned at most $ O(B\times n_I) $ tuples.

\subsection{Cardinality Estimation}
\label{sec:CE}

For cardinality estimation tasks, we can naturally use the proposed CardIndex as a cardinality estimator, as conditional probability correlation information between Z-Order bits of data is learned from  our autoregressive model.

 {Noting that the parameters of our model are so small. It can not achieve very accurate estimations of point densities. Direct engagement in estimating   small cardinality will bring huge estimation errors.} Therefore, we designed a hybrid estimation algorithm in achieving precise  estimation results within a much shorter time. In the case of large cardinality estimation, our estimation algorithm is similar to the progressive sampling in Naru\cite{yang2019deep}. Single inference through the PDF call of our model provides enough information for point density estimation. Based on the results of multiple progressive samples, we can estimate the cardinality of  {the queried region}. If the query selectivity is low enough, we directly use  our index structure to execute the query to obtain the exact cardinality. In this way, we can obtain precise cardinality values in a much shorter time. \looseness=-1

Our idea is outlined in Algorithm~\ref{alg:CE}. In line 2, we use the CDF estimation ability of the structure to probe a rough upper bound $ C_0 $ of the query cardinality(The correctness of the upper bound is proved in \S~\ref{sec FCE}). And $C_0$ is compared with a fixed threshold $ b^{EST} $ in line 3. In line 4, with the assistance of the index, we execute identified small cardinality queries to obtain the exact cardinality value. In lines 5-7, we use the progressive sampling algorithm for large cardinalities.

The implementation of rough estimation is straightforward. As illustrated in lines 9-13, we roughly suppose all the tuples whose Z-value is sandwiched between two endpoints $ Z_{min},Z_{max} $ are in the query box. We take $ cdf_L - cdf_U $ as our rough estimations.

As the attributes in Z-Order are bitwise entangled, the original progressive sampling algorithm cannot be applied directly. Thus, we modify the original progressive sampling strategy so that it can be efficiently adapted to CardIndex. Meanwhile, our pre-$ k $ neuron linkage provides us with the ability to ignore unnecessary computation by only focusing  on the $ k $ dependent neurons. Therefore, we modify the original progressive sampling method to enable quick execution under Z-Order (lines 15-25). In line 19 of our algorithm, we exploit the property of our structure's pre-$ k $ connection by using only the adjacent $k$ bits of the input to the current sampled bit string and reusing some intermediate results generated in the previous step to calculate the current probability information. This avoids iterating from the head of the sampling sequence each time, saving the overhead of repeated computations. Line 21 of our algorithm uses Z-encoded query box information to prune the obtained bit transfer path.

\begin{algorithm}[htb]
	\caption{ Hybrid Estimation.}
	\label{alg:CE}
	\begin{algorithmic}[1] 
		\Require
 
		CardIndex Structrue $M$;
		QueryBox,$ Q $ ;
		Sample Num: $ N_s $;
		\Ensure
		Cardinality Estimation of Query Box Q: $card$;
		
		\Procedure{CardEst }{$ Q,M ,N_s $}
		\State $ C_0 = FastCDFEst(M,Q) $  \Comment{Rough but fast est}
		\If{$ C_0 \leq b^{EST} $} \Comment{Using index to scan small card }
		\State $ card = RangeIndex(M,Q) $
		\Else
		\\
		\Comment{Using Progressive Sample to estimate big card }
		\State $ card = ZProgressiveSample(M,Q,N_s) $
		\EndIf
		\\
		\Return $ card $
		
		\EndProcedure
		
		\Procedure{FastCDFEst}{$ Q,M $}
		\Comment{Naive implement}
		\State $ Z_{min},Z_{max} = ZEncoding(Q) $
		\State $ cdf_L = M(Z_{min}) $
		\State $ cdf_U = M(Z_{max}) $
		\\
		\Return $ cdf_U -cdf_L$
		
		\EndProcedure
		
		\Procedure{ZProgressiveSample}{$ Q,M,N_s $}
		\Comment{ProgressiveSample under Z-Curve}
		\State $ \hat{P} =0 $
		\For{$ i=1\dots N_s $}
		\State $ p_i = 1, \textbf{s}=\textbf{0}_n $
		\State $ M.clear()  $ \Comment{Clear the intermediate result}
		\For{$ j = 1 \dots n $}
		\State $ \hat{P}(Z_i=1|\textbf{s}_{< i })= M(s_{i-k},\dots s_{i-1} ) $
		\State //Zero-Out the non-intercected bit branch
		\State $ \hat{P}(Z_i=1|\textbf{s}_{< i }) = prunBranch(Q,\textbf{s},j) $
		
		\State $ p_i =p_i \cdot \hat{P}(Z_i=1|\textbf{s}_{< i })  $
		\State $ s_i = sample\left( \hat{P}\left(Z_i=1|\textbf{s}_{< i }\right) \right) $
 
		\EndFor
		\State $ \hat{P}  =  \hat{P} +  p_i $
		\EndFor
		\\
		\Return $ (\hat{P} /N_s) \cdot N $
		
		\EndProcedure
	\end{algorithmic}
\end{algorithm}

Although the naive FastCDFEst algorithm may looks simple, it only requires two inferences. It is already effective for the detection of some small cardinality queries  since the locality of Z-Order makes the address of these low selectivity queries tend to be really close under bitwise  entangled encoding. However, there are also some corner cases, whose region is discontinuous under Z-Order coding(see Fig~\ref{F4}). 
This discontinuity, results in a very large estimation error of the naive FastCDFEst algorithm. Therefore, we improved the FastCDFEst algorithm, which alleviates the above problems. We discuss the improvement in \S~\ref{sec FCE}.

\subsection{Rough CDF Estimation Improvement}

\label{sec FCE}
In this section, we solve the problem left in the previous section. That is, for some corner case queries, their interval under Z-Order encoding is discontinuous. This large discontinuity leads to a large estimation error if we use naive FastCDFEst. Therefore, we develop an improved algorithm.

The intuition of our method is to conduct a binary search by  probing the query space under Z-Order. The pseudo-code of our algorithm is shown in Algorithm~\ref{alg:CER}. We recursively divide the search space by half and accumulate the interval values on both sides(lines 8-11). If the separation point is not in the query box, we use the means of calculating its $ bigmin $ and $ litmax $  values to correct the afterward search space, such the discontinuous region in the middle is skipped (lines 12-18).  \looseness=-1
	\begin{algorithm}[ htbp]
	\caption{ FastCDFEst-Refine.}
	\label{alg:CER}
	\begin{algorithmic}[1] 
		\Require
		Lower bound Z-value $ Z_{min}  $;
		Upper bound Z-value $ Z_{max}  $;
		QueryBox, $Q$;
		CardIndex Structrue $ M $;
		Current Recursion depth $ d $;
		\Ensure
		Refined Upper Bound of Q's Selectivity: $ sel $;
		
		\Procedure{FastCDFEst}{$ Z_{min} ,Z_{max} ,Q,M,d $}
		\State $ cdf_L = M(Z_{min}) $
		\State $ cdf_U = M(Z_{max}) $
		\State $ \delta =  cdf_U - cdf_L$
		\If{$ d \leq 0 $  or $ \delta \leq \delta_0$}  \Comment{Search space pruning}
		
		\State  $ sel = \delta $
		\Else \Comment{Binary CDF Search}
		\State $ Mid = (L+U)/2 $
		\If{$ M \in Q $}
		\State $ c_0 = FastCDFEst(Z_{min},Mid,Q,M,d-1) $
		\State $ c_1 = FastCDFEst(Mid,Z_{max},Q,M,d-1) $
		
		\Else
		\State $ bigmin = getBIGMIN(Q,Mid) $
		\State $ litmax = getLITMAX(Q,Mid) $
		\State \small $ c_0 = FastCDFEst(Z_{min},litmax,Q,M,d-1) $
		\State \small $ c_1 = FastCDFEst(bigmin,Z_{max},Q,M,d-1) $

		\EndIf
		\State $ sel = c_0 + c_1 $
		\EndIf
		\\
		
		\Return $ sel$
		
		\EndProcedure
	\end{algorithmic}
\end{algorithm}

In order to realize efficient computation in a limited time, we prune the search space using the following strategies. If the search depth exceeds budget $ (d \leq 0) $ or the search space's selectivity is too low  $ (\delta \leq \delta_0)$, we directly return the CDF distance between $ Z_{min} $ and $ Z_{max}  $(lines 1-7). In the worst case, the time complexity of the above algorithm is $ O(n \times 2^{d_0}) $, where $ d_0 $ is the maximum search depth, and $ n $ is the Z-Ordered binary bit encoding length.

We use Fig~\ref{F4} as an example to illustrate the above process. For the query $  (X = 0 , 4 \leq Y \leq 8)$ in Fig~\ref{F4}, direct use of all tuples between query endpoints 16-64 as an approximation will bring great errors. On the contrary, we take the middle point 40 and calculate its $ bigmin $  21 and $ litmax $ 64. We then use them to narrow the search space. We can observe that, through a single recursive search step, we can reduce the rough cardinality estimation of the query region from 48 to 7, which is reduced to its original   $ 1/7 $.
			
\begin{figure}[htbp] 
	\centering
	\includegraphics[height=6.5cm]{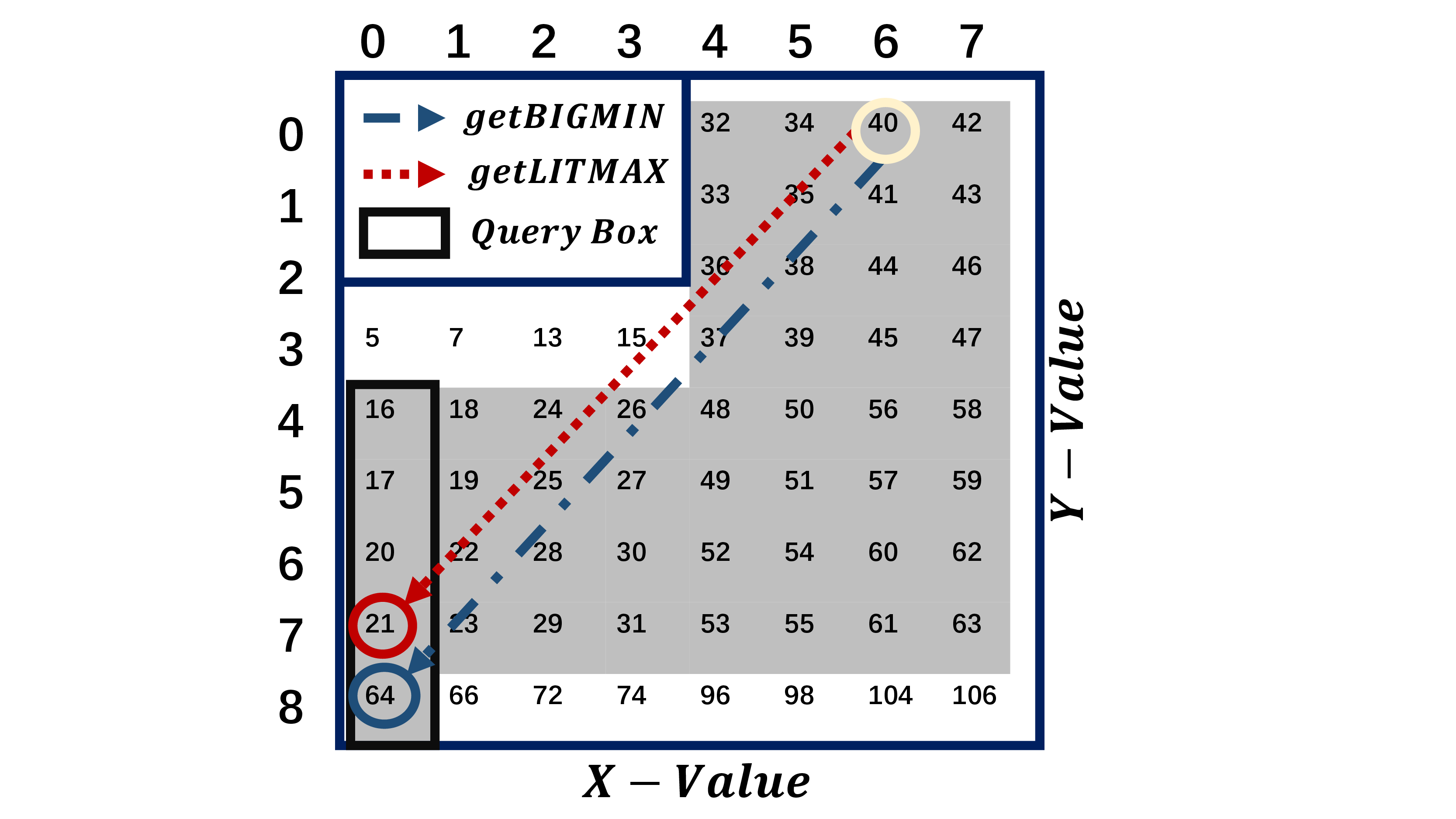}
	\caption{Refining FastCDFEst}
	\label{F4}
\end{figure}

A natural question is, does our proposed hybrid algorithm get caught up in a trap  searching too many tuples, resulting in performance degradation?  {We claim that this will never happen. The number of the tuples we scanned $ N_{scan} $ is limited under our preset selectivity threshold $ b^{EST} $. The correctness is proved through the following theorem.}

\begin{theorem}
	$ N_{scan}\leq b^{EST}\times  N$, where $ N_{scan} $ is the number of the tuples scanned, $ N $ is the cardinality of the whole table, and $ b^{EST}  $ is our preset selectivity threshold in Alg~\ref{alg:CE} .
\end{theorem}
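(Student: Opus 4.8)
The plan is to trace which branch of the hybrid estimator (Algorithm~\ref{alg:CE}) actually touches base tuples and to bound only the work done there. Scanning occurs exclusively in the small-cardinality branch, i.e. when the guard $C_0 \le b^{EST}$ holds and the algorithm calls \textsc{RangeIndex} to fetch the exact answer; the progressive-sampling branch reads no stored tuples. It therefore suffices to prove that, whenever this branch is taken, the range scan inspects at most $N_{scan} \le C_0\cdot N$ tuples. Chaining this with the guard gives $N_{scan}\le C_0\cdot N \le b^{EST}\cdot N$, which is exactly the claim.

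First I would make precise what $C_0$ counts. The rough estimator \textsc{FastCDFEst} returns the CDF mass of a family of Z-order intervals that covers the query box: in the naive form this is the single interval $[Z_{min},Z_{max}]$, and in the refined form (Algorithm~\ref{alg:CER}) it is the disjoint union remaining after the true gaps detected by \textit{getLITMAX}/\textit{getBIGMIN} are deleted. Since the number of stored tuples whose Z-value lies in an interval $[a,b]$ is $(CDF(b)-CDF(a))\cdot N$, the quantity $C_0\cdot N$ is precisely the number of tuples whose Z-code falls inside this covering family.

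Next I would show that every tuple examined by \textsc{RangeIndex} (Algorithm~\ref{alg:RIQ}) has its Z-value inside that same family. The scan only visits pages lying, in Z-order, between \textit{ptrStart} and \textit{ptrEnd}, and it invokes \textit{getBIGMIN} to jump over exactly the gaps that the refined estimator also removes; because \textit{getBIGMIN}/\textit{getLITMAX} return exact Z-values, no tuple outside the measured intervals is ever read. Hence the scanned set is contained in the $C_0\cdot N$ tuples counted above, which yields $N_{scan}\le C_0\cdot N$, and substituting the branch guard finishes the argument.

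The main obstacle is this second step: reconciling the interval-level, model-derived number $C_0$ with the page-level, exact count $N_{scan}$. Two points need care. First, $C_0$ is produced from the approximate root CDF $M$, so I must lean on the upper-bound correctness of the rough estimator established earlier to guarantee that $C_0$ never \emph{under}-estimates the true mass of the scanned region; otherwise the chain of inequalities could break. Second, because \textsc{RangeIndex} detects a gap only after probing one boundary page, the scan may touch a few page-boundary tuples lying just inside a deleted gap; I would absorb these either by arguing against the naive span $[Z_{min},Z_{max}]$, which strictly contains everything the scan can see, or by charging them to the page-granularity slack, thereby confirming that $N_{scan}\le C_0\cdot N$ still holds at page granularity.
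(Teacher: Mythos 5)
Your proposal follows essentially the same route as the paper's own proof sketch: both reduce the theorem to showing $N_{scan} \leq C_0 \cdot N$ via the branch guard $C_0 \leq b^{EST}$ (scanning happens only in the index branch), and both establish that inequality by arguing the region the range scan visits is contained in---and hence has no more mass than---the family of Z-intervals whose CDF mass defines $C_0$, using the exactness of \textit{getBIGMIN}/\textit{getLITMAX} to handle the refined estimator (the paper phrases this as the estimator's limited-depth recursion removing less gap mass than the real execution skips, which is the same containment stated algebraically, with the scan bounded at page granularity by $B \times n_I$). One caveat: of your two fixes for the page-boundary tuples, the fallback to the naive span $[Z_{min},Z_{max}]$ would not suffice when the \emph{refined} estimator supplies the guard (its $C_0$ is smaller than the naive mass), so you would need your second fix---charging probed boundary pages to page-granularity slack, which is what the paper's $N_{scan}\leq B\times n_I$ bound does implicitly.
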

\begin{proof}
	(Sketch): 
	Since we only perform index execution under the condition that the estimated cost $ C_0\leq b^{EST} $, the direction of our  proof is to prove that $ N_{scan}\leq C_0\times  N$.
	In \S~\ref{sec Range Query Processing}, we analyzed that the number of tuples scanned $ N_{scan} $ is smaller than $  B\times n_I $, where $ n_I $ is the number of pages intersecting the query box, and $ B $ is the size of the page. According to this conclusion, we give the intuition that the results of both $ FastCDFEst $ algorithms designed will be larger than this value. 
	
	For naive $ FastCDFEst $, {many Z-Ordered pages between $ Z_{min} $ and $ Z_{max} $ are not intersected by the query box. Therefore, $  B\times n_I \leq (cdf_U -cdf_L)\times N$. Given that $ N_{scan} \leq  B\times n_I $ and $ C_0 =  (cdf_U -cdf_L) $ in naive $ FastCDFEst $, we conclude that $ N_{scan}\leq C_0 \times N $.}
	
	For improved $ FastCDFEst $,  its recursive search, in fact, can be thought of as a pre-execution of limited   \textit{getBIGMIN}  calls in a range query. Due to its limited recursively search depth, it cannot remove too many discontinuous intervals, and thus the  ``jumped" distance should be less than the discontinuous Z-value interval of the  ``skip" in real execution, which means that $ ((cdf_U -cdf_L)\times N - C_0 \times  N ) \leq ((cdf_U -cdf_L)\times N - B\times n_I )   $. Therefore, $  B\times n_I \leq C_0\times N$ is obtained, which derives that $ N_{scan}\leq C_0 \times  N $. 
	
	In summary, both of our algorithms will maintain $ N_{scan}\leq C_0 \times  N $. Since we only execute $ C_0 \leq b^{EST} $ queries, we derived that  $ N_{scan}\leq  b^{EST}  \times  N $.
\end{proof}

The above upper bound ensures that the maximum number of tuples we scanned is limited to be smaller than $ b^{EST} \times  N$ when we perform Alg~\ref{alg:CE}'s  hybrid method in obtaining the exact value of cardinality. Index scans are performed only when the scanning cost is lower than $ b^{EST} \times  N$. Therefore, we just set the maximum index scanning selectivity   $ b^{EST} $  beforehand. Our hybrid estimation strategy will never fall into the trap of performing index scans on larger cardinalities than $ b^{EST} \times  N $.

\section{Experiments}

In this section, we answer the following questions experimentally.

1. Compared with the state-of-the-art selectivity estimators, how does CardIndex perform in terms of the estimated quality of different workloads, model size, training time, and inference time? (\S~\ref{sec5.2})

2. Compared with the state-of-the-art traditional multidimensional  indexes, how does CardIndex perform in terms of range query, point query, model size, and construction time? (\S~\ref{sec5.3})

3. How do the techniques adopted in the approach, i.e. Pre-$ k $ neuron linkage, FastCDFEst, affect the quality and efficiency of the CardIndex structure? (\S~\ref{sec5.4})

\subsection{Experimental Setup}

\textbf{Environment.} The experiments are done on a computer with an AMD Ryzen 7 5800H CPU, NVIDIA  RTX3060 GPU, 64 GB RAM, and a 1 TB hard disk. We use PyTorch in Python to train our model on GPU and use C++ on CPU to infer our model on two tasks: Index and Estimation. For the baselines of cardinality estimation, we implemented them in Python, and for the baselines of indexing tasks, we implemented them in a C++ environment.

\textbf{Competitors.} We choose the following baselines in Index and Estimation.

(1) \textit{Naru}~\cite{yang2019deep}: The state-of-the-art estimator on single table's cardinality estimation. Using the  autoregressive model and progressive sampling to estimate the query region's cardinality. We used the default net parameters in its methods

(2) \textit{Sample}~: Sample several records in memory for CE. The sampled Ratio is set as 1\%.

(3) \textit{DeepDB}~\cite{hilprecht2019deepdb}: Use Sum-Product Network to estimate cardinality.

(4)\textit{ MHIST}~\cite{ViswanathPoosala1996ImprovedHF}: The multidimensional histogram is used to store the PDF approximation and to predict the cardinality value based on it.

(5)\textit{ KDB-Tree}~\cite{robinson1981kdb}: This index uses a KD-tree with a B-tree structure to support block storage. We adopt the state-of-the-art implementation from Moin's baseline code in the Waffle project \cite{moti2022waffle}.

(6) \textit{R-Tree}:\cite{beckmann1990r} The state-of-the-art traditional multidimensional index. We adopt implementation from boost.org.

\textbf{Datasets.} We use three real-world datasets and one synthetic dataset for experimental study on index and estimation tasks.

(1)\textit{Power}~\cite{Power}, an electric power consumption data, which owns a large domain size in all attributes (each $ \approx $ 2M). Our snapshot contains 2,049,280 tuples with 6 columns.

(2) \textit{DMV-INT}~\cite{DMV}, Real-world dataset consists of vehicle registration information in New York. We use the following 11 columns with widely differing data types and domain sizes (the numbers in parentheses): record type (4), reg class(75), state (89), county (63), body type (59), fuel type (9), valid date (2101), color (225), sco ind (2), sus ind (2), rev ind(2). Given that there are many non-numeric data, we sort each column and convert each string of it to its corresponding integer coordinate value. Our snapshot contains 12,300,116 tuples with 11 columns.

(3)\textit{OSM}\cite{OSM}, Real-world geographical data. We downloaded the data set   of Central America from OpenSteetMarket and selected two columns of latitude and longitude. This data set has 80M tuples(1.8GB in size). Our snapshot contains 80,000,000 tuples with 2 columns.  \looseness=-1

(4)\textit{Synthetic}, Synthetic geographical data. We made a slight change to the OSM data and scaled it to 128M by copying its first 48M rows. Our snapshot contains 128,000,000 tuples with 2 columns.

\textbf{Evaluation Metrics.} To evaluate the performance on index and estimation tasks, we adopt the following perspectives: \textit{Q-error metric}: defined as $ Q(E,T) = $  $ max\{\frac{E}{T},\frac{T}{E}\} $ where $ E $ is the estimated cardinality value and  $ T $ is the real cardinality. We reported the entire Q-error distribution as  (50\%, 95\%, 99\%, and  Q-Max quantile) of each workload. \textit{Latency metric} : for index tasks, we report their average response time for each query execution. For estimation tasks, we report the average inference time for each estimation.  \textit{Size metric} :
We report the total size of the non-leaf nodes for the index structure and the size of the overall structure for an estimator model.

\textbf{Parameter Settings.} We adopt the original settings of all the baseline methods. For  the CE tasks of our  CardIndex(CI) in \S~\ref{sec5.2}, we fixed the Minimum estimation bound $b^{EST}$  in our method as $ 1e^{-2} $ and Linked Neuron's number as 32.  For our CI's index performance study in \S~\ref{sec5.3}, we allocate the number of linear sub-models through the following principle{: For each 1M tuple, we assign 128 sub-models. Therefore, we  allocate 256 sub-models on the Power data set, 1536 on the DMV data set, and 10272 on the OSM data set.}  The linear error bound for the CardIndex's Linear Refinement is set to $ 1/10 $ of the page CDF capacity by default, and the default minimum pruning precision of $ \delta_0 $ of
our $ FastCDFEst $ is set to 0.001.  The fanout number is set to 100 for all index experiments including CI's submodels and other baseline methods.

\textbf{Workloads.} We generate multidimensional  queries through the following procedures. Firstly, we randomly select several rows from the data as the center of the query, select an integer from the range of $ [1, D] $ as the number of predicates, and then sample a number from $ (0, s_i\cdot w_i) $ among the selected columns as the width of the query, where $ s_i  $ is the scaling coefficient to generate some small cardinality queries, and $ w_i $ is the distance between the maximum and minimum values of the column. We first generate several queries under the three $ s_i$  $  (1,0.01,0.001)$  and then mix them. Then, according to their selection degree, they are divided into   three groups with each group having 1k queries, i.e. High($ sel \geq 1e^{-2} $),  High($ sel \geq 1e^{-4} $), Extreame-Low ($ sel \leq 1e^{-4} $).   (See Fig.\ref{wlD})

\begin{figure}[]
	\centering
	\includegraphics[height=5cm]{./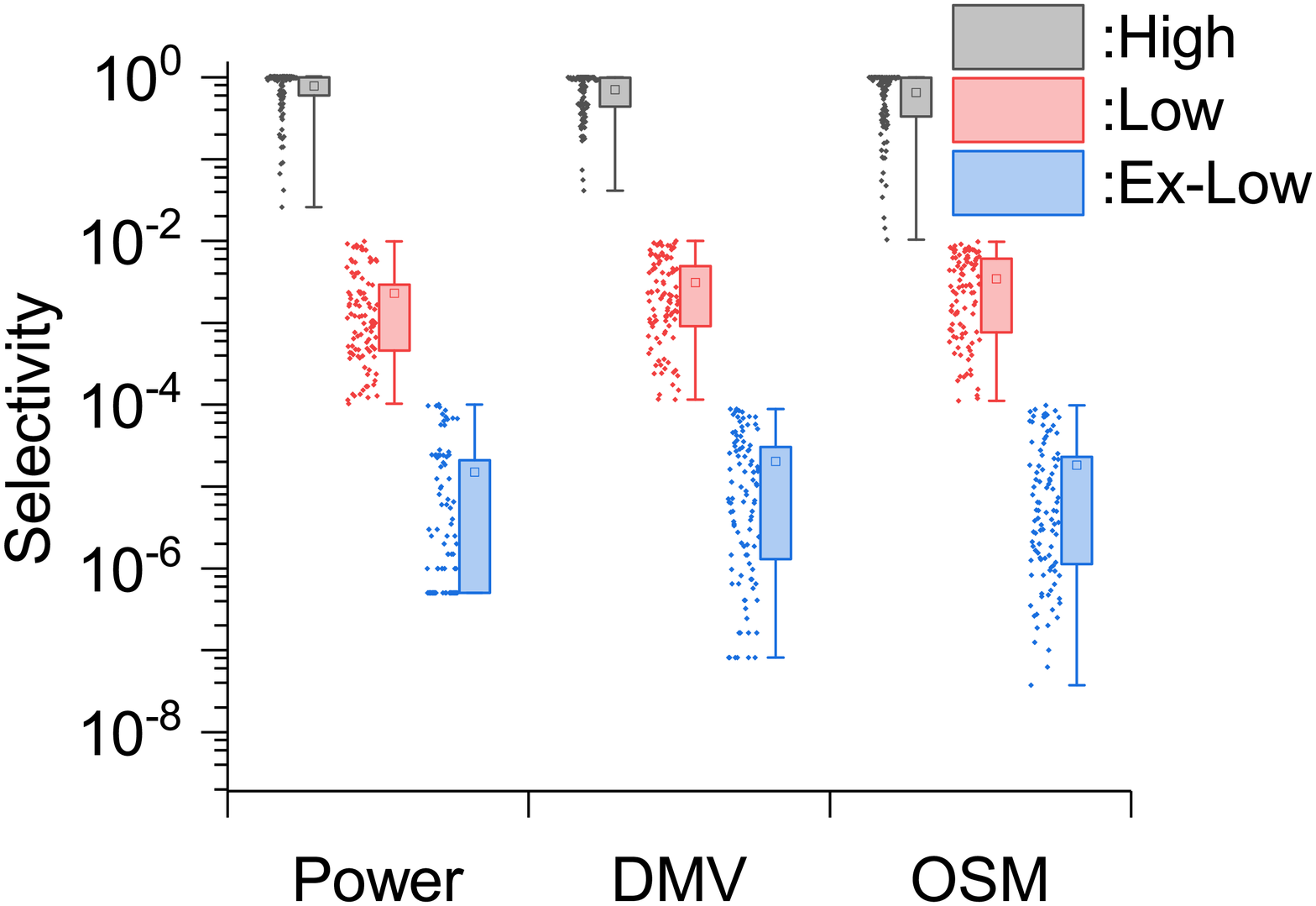}
	\caption{Distribution of workload selectivity \\ \centering{(Sampled 10\% from the query workload for demo)}}
	\label{wlD}
\end{figure}
\subsection{Estimation Evaluation}
\label{sec5.2}

In general, in terms of estimation quality on large cardinalities workloads, our method is slightly worse (2.5$ \times $) than the current SOTA method due to the parameter scale of our model being 2-3 orders of magnitude smaller than these models. However, the estimation quality on low selectivity workloads is  1.3-114 $\times $ better than the current SOTA data-driven cardinality estimation strategy. In terms of the estimated latency, although there is a slight distance between our method with the direct  sampling strategy, the average estimation latency of our method on CPU is up to 12 times faster than the inference of Naru's method on GPU. In terms of training time, under the same epoch iteration, our network training is 2 orders of magnitude faster than Naru's.

The details are as follows.  We tested the existing baseline estimators on three different workloads in three real-world datasets. The number of the progressive samples in our Card Index (CI for short) is fixed as 2,000 as Naru's.  

\begin{table*}[ht]\label{TabQE}
	\caption{Q-errors,Avg latency(ms) on 3 real-world Datasets's 3 workloads}
	\setlength{\tabcolsep}{1.6mm}{
		\begin{tabular}{|c|c|cccc|cccc|cccc|l|}
			\hline
			\multirow{2}{*}{\textbf{DataSet}} & \multirow{2}{*}{\textbf{Estimator}} & \multicolumn{4}{c|}{\textbf{High($ sel \geq 1e^{-2} $)}} & \multicolumn{4}{c|}{\textbf{LOW($ sel \geq 1e^{-4} $)}} & \multicolumn{4}{c|}{\textbf{Extreame Low( $ sel \leq 1e^{-4} $)}} & \multicolumn{1}{c|}{\multirow{2}{*}{Time(ms)}} \\ \cline{3-14}
			&  & \multicolumn{1}{c|}{50th} & \multicolumn{1}{c|}{95th} & \multicolumn{1}{c|}{99th} & \multicolumn{1}{c|}{MAX} & \multicolumn{1}{c|}{50th} & \multicolumn{1}{c|}{95th} & \multicolumn{1}{c|}{99th} & \multicolumn{1}{c|}{MAX} & \multicolumn{1}{c|}{50th} & \multicolumn{1}{c|}{95th} & \multicolumn{1}{c|}{99th} & MAX & \multicolumn{1}{c|}{} \\ \hline
			\multirow{6}{*}{Power} & Naru &  {1.00} & {1.04} & {1.12} & 1.43 & 1.10 & 1.58 & 2.20 & 5.96 & 1.18 & 3.00 & 5.01 & \textbf{8.0} & 17.4 \\ \cline{2-2}
			& Sample & 1.00 &  {1.02} &  {1.03} &  \textbf{{1.03}} & 1.47 & 671 & $ 1e^3 $ & $ 1e^3 $ & 14.0  & 157  & 194  & 201  & \textbf{2.09} \\ \cline{2-2}
			& DeepDB & \textbf{1.00} & \textbf{ {1.01}} &  \textbf{{1.02}} &  1.18 & 1.08  & 2.44 & 5.23 & 18.8 & 1.30 & 4.67 & 10  & 37.0  & 12.98 \\ \cline{2-2}
			& MHist & 1.00 &  3.13 &  4.90 &  10.1 & 3.35 & 289 & $ 1e^3 $ & $ 9e^3 $ & 2.91  & 67.1  & 130  & 204  & 1086 \\ \cline{2-2}
			
			& CI(d=0) & 1.00 & 1.06 & 1.27 & 1.89 &  {1.24} &  {2.20} &  {2.93} &  {34.7} &  {1.00} &  {1.41} &  {5.00} &  {70.0} & 22.6\\ \cline{2-2}
			& CI(d=16) & 1.00 & 1.06 & 1.27 & 1.89 & \textbf{1.00} & \textbf{1.06} &\textbf{ 1.31} & \textbf{2.12} & \textbf{1.00} &\textbf{ 1.00} &\textbf{ 1.00 }&  {11.0}  &17.08 \\ \hline
			\multirow{6}{*}{OSM} & Naru & 1.00 & 1.08 & 1.13 & 1.28 & 1.18 & 2.91 & 6.79 & 432 & 2.81 &  42.2 &  141 & 311 & 84.1 \\ \cline{2-2}
			& Sample & \textbf{1.00} & \textbf{1.01} &  {1.02} & \textbf{1.02} &  {1.22} & 2.24 & 293 & $2e^4$ & 238 & $ 4e^3 $ & $ 7e^3 $ & $ 7e^3 $ &\textbf{1.09 }\\ \cline{2-2}
			& DeepDB & 1.00 &  {1.01} &  \textbf{1.01} &  1.03 & 1.24  & 1.93 & 3.57 & 11.7  & 2.89   & 56.9  & 254  & $ 2e^{3} $& 7.08 \\ \cline{2-2}
			& MHist & 1.08 &  1.57 &  2.37 &  4.15 & 2.67 & 26.1 & 51.7 & 159 & 15.3  & $ 1e^{3} $  & $ 2e^{3} $ & $ 5e^3 $  & 5235 \\ \cline{2-2}
			
			& CI(d=0) & 1.00 & 1.22 & 1.79 & 2.69 & 1.00 & 2.22 & 3.05 & 30.6 & 1.00 & 1.36 & 2.03 & 21.0 & 5.65 \\ \cline{2-2}
			& CI(d=16) & 1.00 & 1.22 & 1.79 & 2.69 & \textbf{1.00 }& \textbf{1.24} &\textbf{ 1.70} & \textbf{10.9} & \textbf{1.00} & \textbf{1.00} &\textbf{ 1.35} & \textbf{8.94 }& 7.06 \\ \hline
			\multirow{6}{*}{DMV} & Naru & 1.00 & 1.04 & 1.06 & \textbf{1.09} & 1.90 & 3.56 & 47.3 & 943 & 2.98 & 4.3  & 55.9 & 250 & 30.9 \\ \cline{2-2}
			& Sample & \textbf{1.00} & \textbf{{1.02}} & \textbf{1.06} & 2.59 &  {1.14} &  {1.79} & 17.5  & $ 1e^3 $ & 46.5 & 708 & 874 & 969 & \textbf{4.14} \\ \cline{2-2}
			& DeepDB & 1.92 &  2.11 &  2.66&  33.7 & 1.98 & 8.18 &  131 & $ 6e^{3} $& 2.18  & 29.8 & 220  & 908  & 6.78\\ \cline{2-2}
			& MHist & 1.93 &  2.62 &  3.20 &  35.6 & 3.57 & 19.4 & 50.1 & 685 & 13.0  & 174  & 399  & 686  & 1055 \\ \cline{2-2}
			
			&  CI(d=0)  & 1.00 & 1.08 & 1.32 & 2.75 & 1.42 & 3.00 &  {3.97} &  {36.6} &  {2.43} &  {8.22} &  {17.0} &  {147} & 11.5 \\ \cline{2-2}
			& CI(d=16)& 1.00 & 1.08 & 1.32 & 2.75 & \textbf{1.00} &\textbf{ 1.71} &\textbf{ 2.40} & \textbf{8.21} & \textbf{1.00 }&\textbf{ 1.00}&\textbf{ 1.00}&\textbf{ 10.0}& 7.6 \\ \hline
	\end{tabular}}
\end{table*}

As can be seen from Table 1, for large cardinality estimation,   existing cardinality estimators all perform well. Compared to the state-of-the-art methods, our method performs slightly worse on this type of load, where the maximum Q-error is twice as high as Naru's. This is because, under these high selectivity workloads, our hybrid estimation algorithm cannot use the index to enhance the performance on the low selectivity query. Therefore, the method degenerates into a progressive sampling algorithm under tiny models. Considering that our network has the smallest number of parameters, only 10-100KB of parameter size (1/100 - 1/1000 of Naru's), the small parameters of the model make it impossible for us to use a single structure to learn detailed fine-grained distribution information like Naru's, thus bringing  errors to the estimation.
When the selectivity of the query decreases, the advantage of our method is demonstrated via hybrid estimation. At Low selectivity workloads($ sel \geq 1e^{-4} $), the maximum Q-error is outperformed by $ 2$-$114\times $ compared with Naru's and $ 125$-$2000 \times $ compared with sampling. On the Ex-Low workload, the performance distance is also pronounced, i.e. the $ 1.3$-$38  \times$ smaller  Q-error against Naru's and $ 3$-$222\times $ smaller than DeepDB's. We analyzed the reasons for our baseline's poor performance as follows.

Under such low selectivity workload, it is often hard for Sample to extract elements from the queried region, so it tends to obtain the prediction result of 0 cards, resulting in  Q-error being the size of the real cardinality number. For DeepDB, it is difficult to accurately delineate correlations between columns, therefore for OSM and DMV data, which may have a strong correlation between attributes, it produces a large error. Although Naru can use autoregressive models to learn correlations between attributes, it is still very difficult to obtain accurate estimations with such extremely-small cardinality. Because such a small cardinality actually imposes a great demand on the point density prediction accuracy and the number of progressive samples of the model. The limited progressive-sampling points in Naru's setting can hardly cover such a complex space intercepted by multiple range predicates. Therefore, insufficient sampling and low precision point estimation result in large errors in  prediction on some corner cases with small cardinalities. To conclude, for these above data-driven methods, it is impossible to learn completely precise distribution information under a limited parameter storage budget, and thus for these small cardinality estimations, there are often very large errors.

In contrast,  with the help of an auxiliary index structure, we can efficiently search through the original table and retrieve the precise results of these small cardinality queries. Thus, the advantage of our hybrid estimation is that the estimations of small cardinalities are performed efficiently directly using the index to obtain a true and error-free distribution. Although the method is efficient enough to be effective for 50\% of Ex-low workload queries at $ d=0 $ (the naive implementation mentioned in \S~\ref{sec:CE}), deeper search steps of   \textit{FastCDFEst}   refine the search space, making our CardIndex's classification more precise and easier to identify small cardinality queries executed on the index. When the search depth increases from 0 to 16, the maximum Q-error of the mixed estimation is reduced by 10 times. The \textit{FastCDFEst} method is extremely effective, as under Power and DMV dataset, 99\% of the extremely Low workloads have been accurately identified and precisely estimated, with a Q-error of 1.0  by index execution. 

At the same time, in terms of latency, it is worth acknowledging that the time efficiency of sampling is the best. Given that we do not have to perform that much computation on such methods, going through the sampled data is  the only cost. Meanwhile, it should be noted that although CardIndex uses progressive sampling as Naru does, the inference latency of progressive sampling of ours performed on CPU is even 1.02-12 times faster than that of Naru method implemented on GPU. This is for two reasons. On the one hand, our network's parameters are 1/100 of the size of Naru's, so inference does not require as much computation. On the other hand, for small cardinality queries, direct  execution on the fly will be much faster than entirely sampling estimation algorithms. Meanwhile, as our experimental results show, a larger depth does not necessarily lead to more time consumption. Meanwhile, since the direct index execution time of a small cardinality is usually 2-3 orders of magnitude smaller than these estimation algorithms, the larger the depth is, and the faster the estimated time on Power and DMV data sets is.\looseness=-1

We also report the training time and the structure size of the estimation model in Fig~\ref{FigConT} and Fig \ref{FigConS}. We fixed the epoch of training to 20 in both neuro-based methods. It is noted that the training time of our network is 1-2 orders of magnitude faster than that of Naru due to the network parameters of our models(CI-Core) being 1/100 of Naru's (\ref{FigConT}).

\begin{figure}[htbp]
	\centering
	\subfigure[  Construction Time]{
		\label{FigConT}
		\includegraphics[width=0.45\columnwidth]{./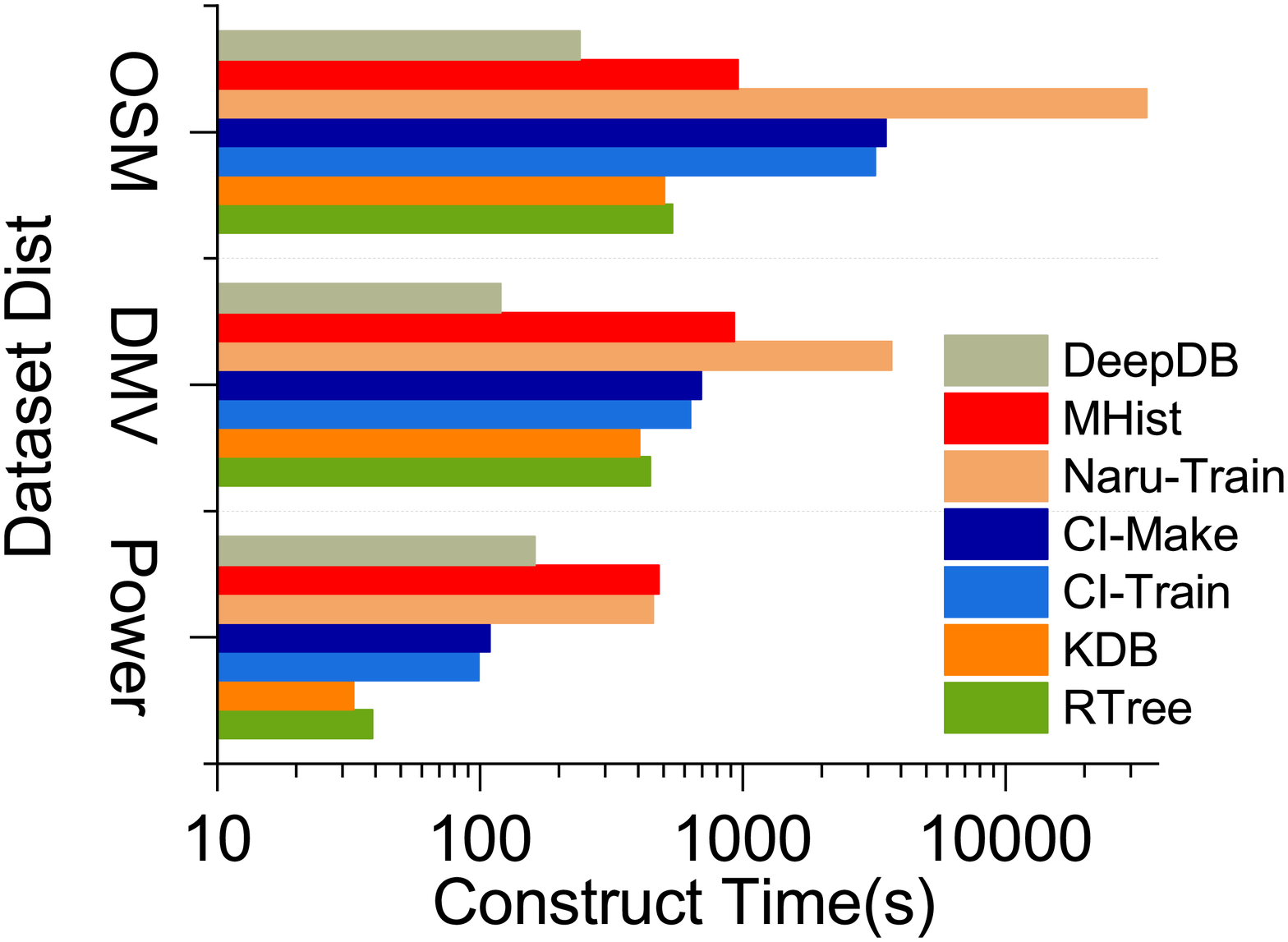  }
	}
	\subfigure[  Structure Size]{
		\label{FigConS}
		\includegraphics[width=0.45\columnwidth]{./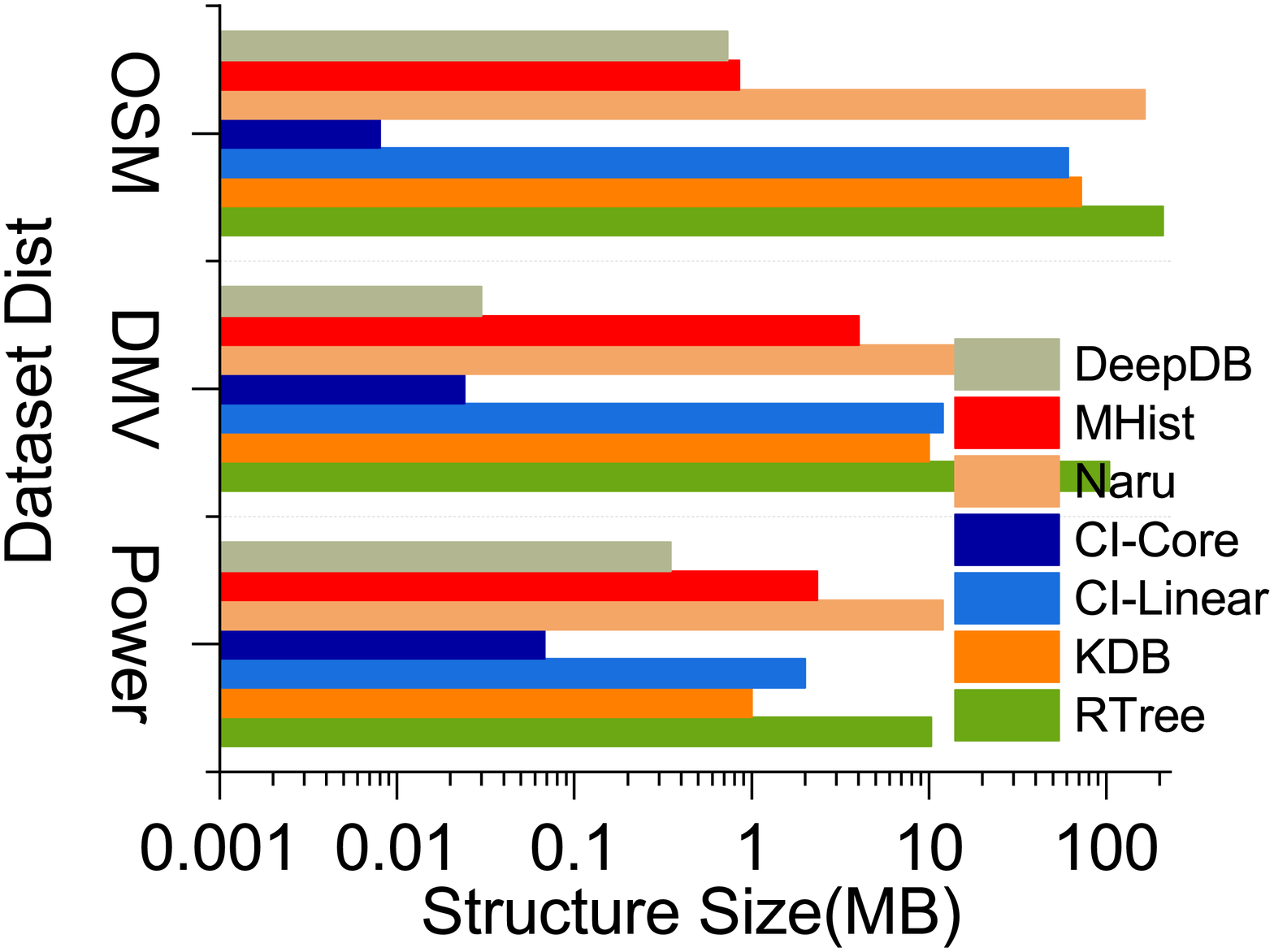}
	}
	\subfigure[  Construction  Scalability]{
		\label{FigConTSCA}
		\includegraphics[width=0.45\columnwidth]{./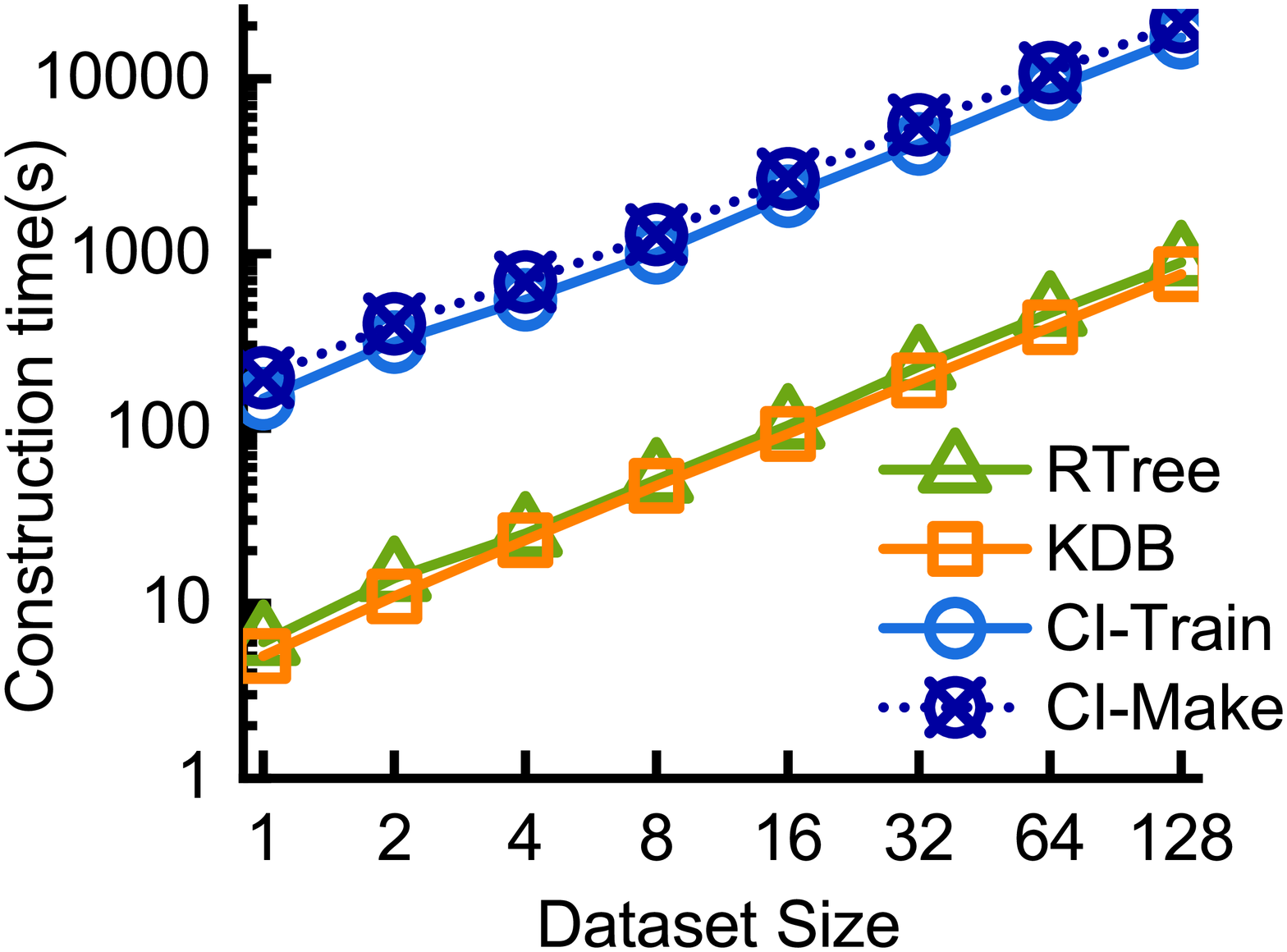  }
	}
	\subfigure[    Size Scalability]{
		\label{FigConSSCA}
		\includegraphics[width=0.45\columnwidth]{./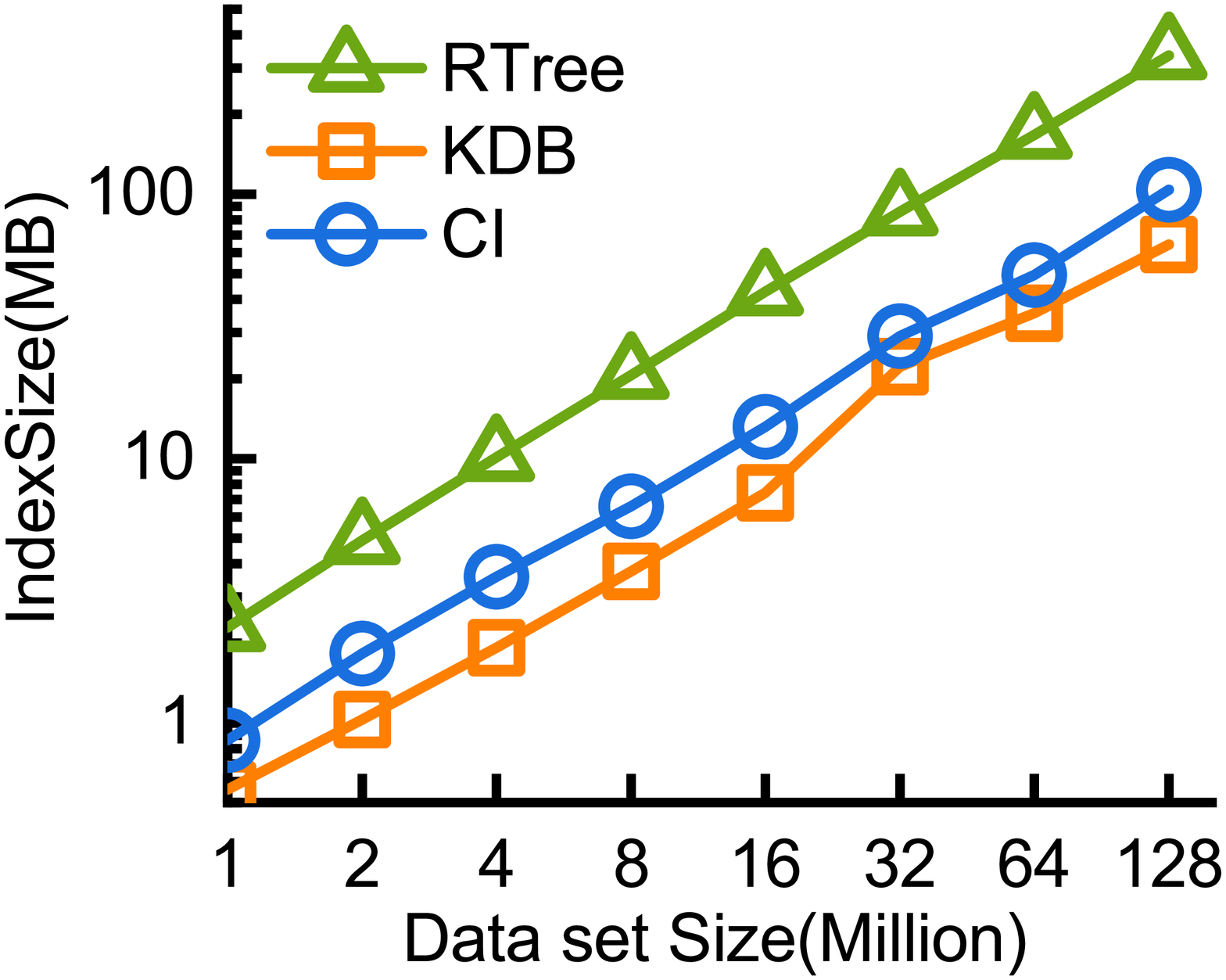}
	}

	\caption{Structure Construction Info }
	\label{Exp8}
	\vspace{-1.5em}    
\end{figure}
		
\subsection{Index Evaluation}
\label{sec5.3}

\textbf{Structure Construction.} We report the size and construction time of our structure and  baseline approaches  taken on Fig~\ref{Exp8}. It is not difficult to point out that, in terms of space, our structure occupies a space similar to the memory space of a traditional index compared to the current mainstream traditional methods. Also in terms of building time, compared to other deep autoregressive approaches, our approach is much faster.  (CI's construction is 2 orders of magnitude faster than Naru's). However, there is still room for improvement in our approach compared to traditional index building. (The construction time of CI is 10 times slower than KDB and RTree).

\textbf{Point Query.} We sample 1k points from each dataset, use them as query points and report the average response time.  As Fig~\ref{FigPQ-real} shows, our CI has the best point performance on all datasets. It is $ 30\% - 40\% $  faster than the SOTA traditional structures on three real data sets. The results in the synthetic data set are consistent with those in the real data set. For example, the point query on the OSM dataset takes $ 4.4\mu s $ for KDB,  $2.9 \mu s $ for the R tree, and  $ 2.2  \mu s$ for our CardIndex. Given the fact that our structure is also a learned index, this experimental fact is not surprising. Because our structure can calculate the value of a given tuple coordinate in constant time, it avoids the log-scale scanning traversal of blocks in the traditional structure. \looseness=-1

\begin{figure}[thbp]
	\centering
	\subfigure[  Real-World Dist]{
		\label{FigPQ-real}
		\includegraphics[width=0.45\columnwidth]{./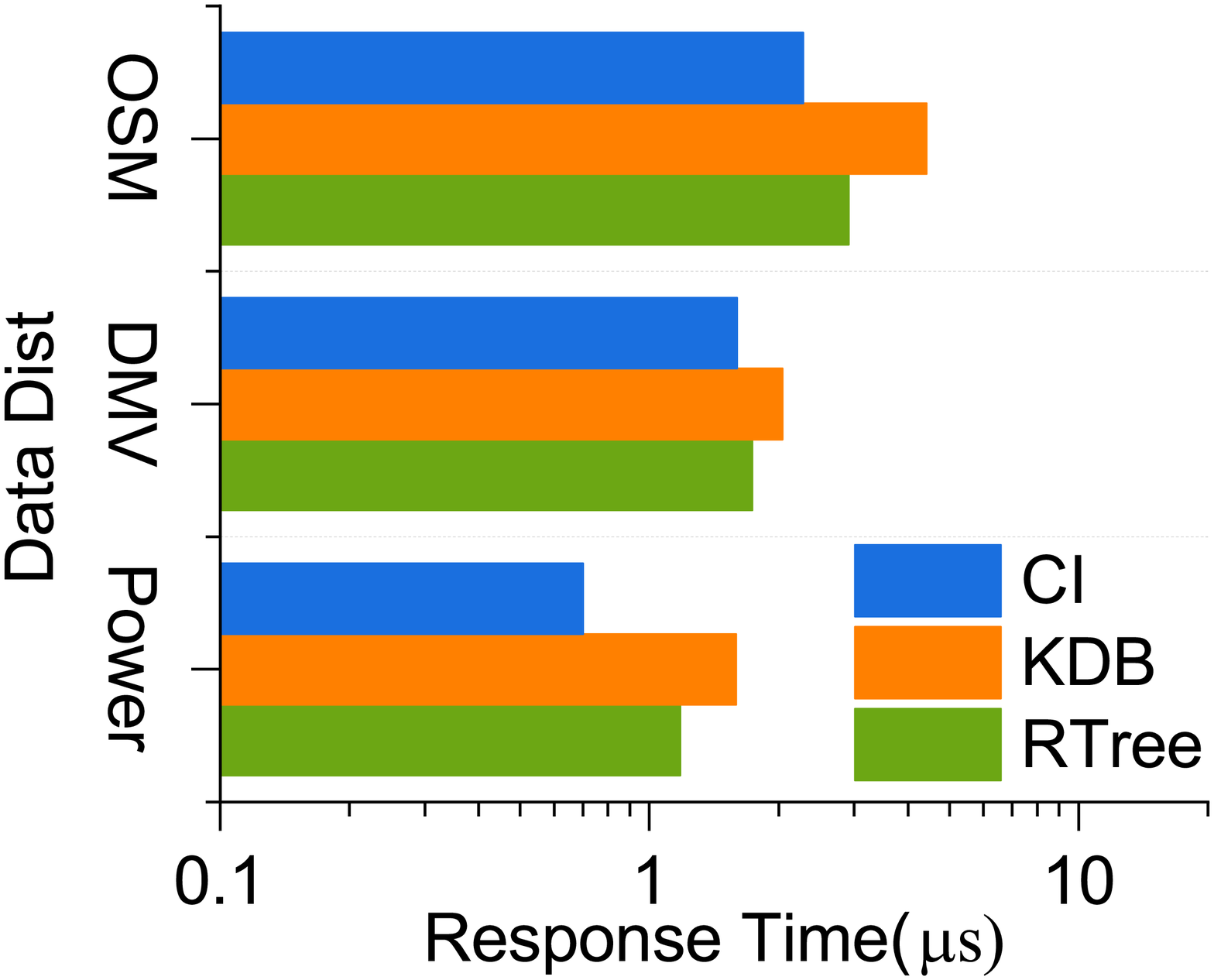  }}
	\subfigure[   Query Scalability ]{
		\label{FigPQ-sca}
		\includegraphics[width=0.45\columnwidth]{./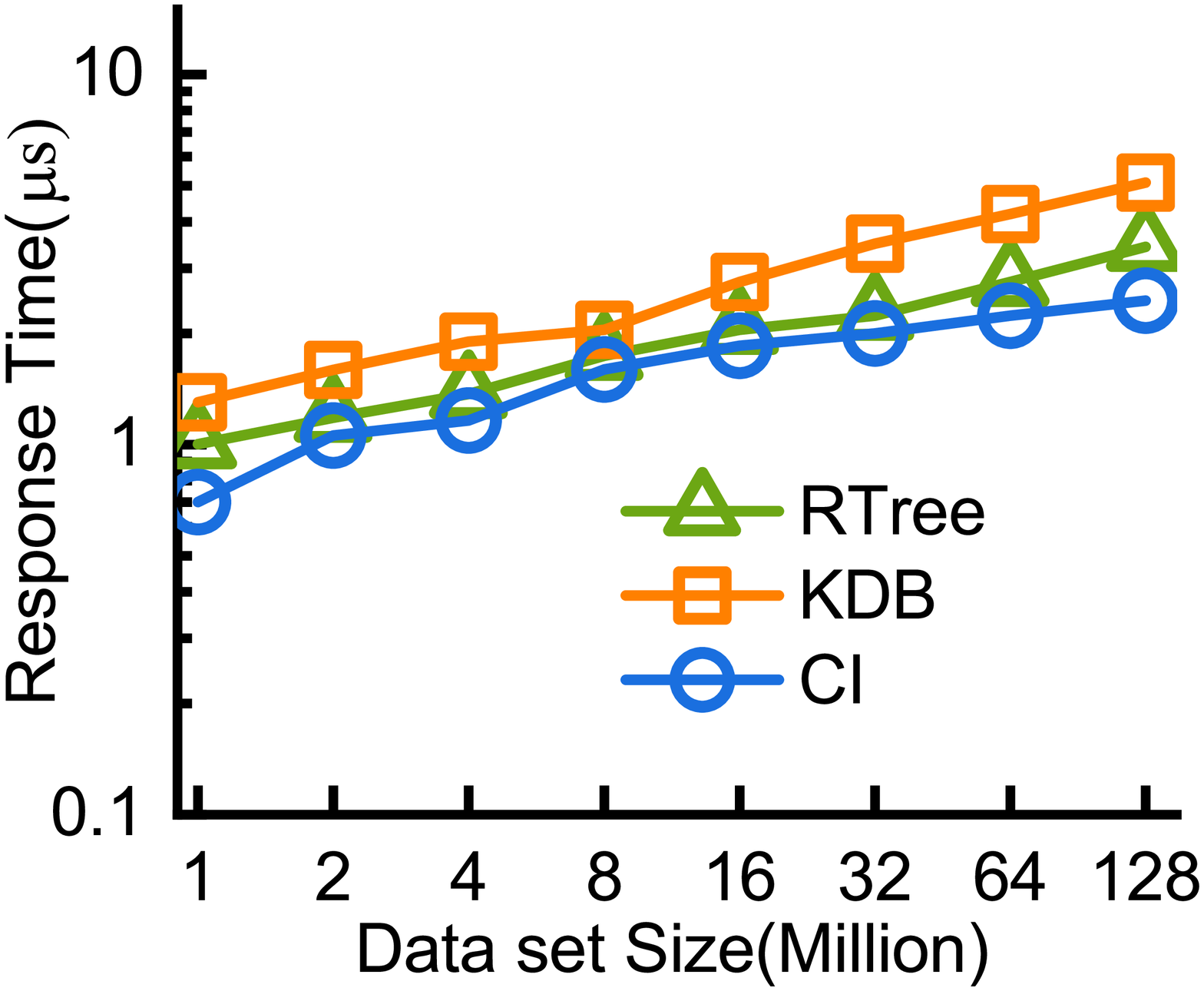}}
	
	\caption{Point Query Details }
	\label{Expx9}
	\vspace{-1.5em}    
\end{figure}

\textbf{Range Query}. We directly evaluate the index task using two query workloads generated when evaluating the cardinality estimator. We chose High(\textit{W1}) and Low workloads(\textit{W2}). As under ExtremeLow workload, the result set is too small, and the result is quite similar to the point query. For the interests of space, we did not show the corresponding experimental results. We find that the average time of several structures is similar under High workloads(\textit{W1}). While on Low workloads(\textit{W2}), our CardIndex is 4-10 times faster than KDB and RTree. Indeed, RTree is twice as fast as our approach on the Power data set. Our debugging observations show that this problem is related to the ordering of the untuned Z-Order curve may not be optimal for some certain range queries. The average selection among the accessed blocks is so low that only a small fraction of each fetched block actually falls into the query area. Tuning Z-Order and a more compact block could solve this problem. We will solve this problem in future work.

\begin{figure}[thbp]
	\centering
	\subfigure[  Real-World Dist]{
		\label{FigRQ-real}
		\includegraphics[width=0.45\columnwidth]{./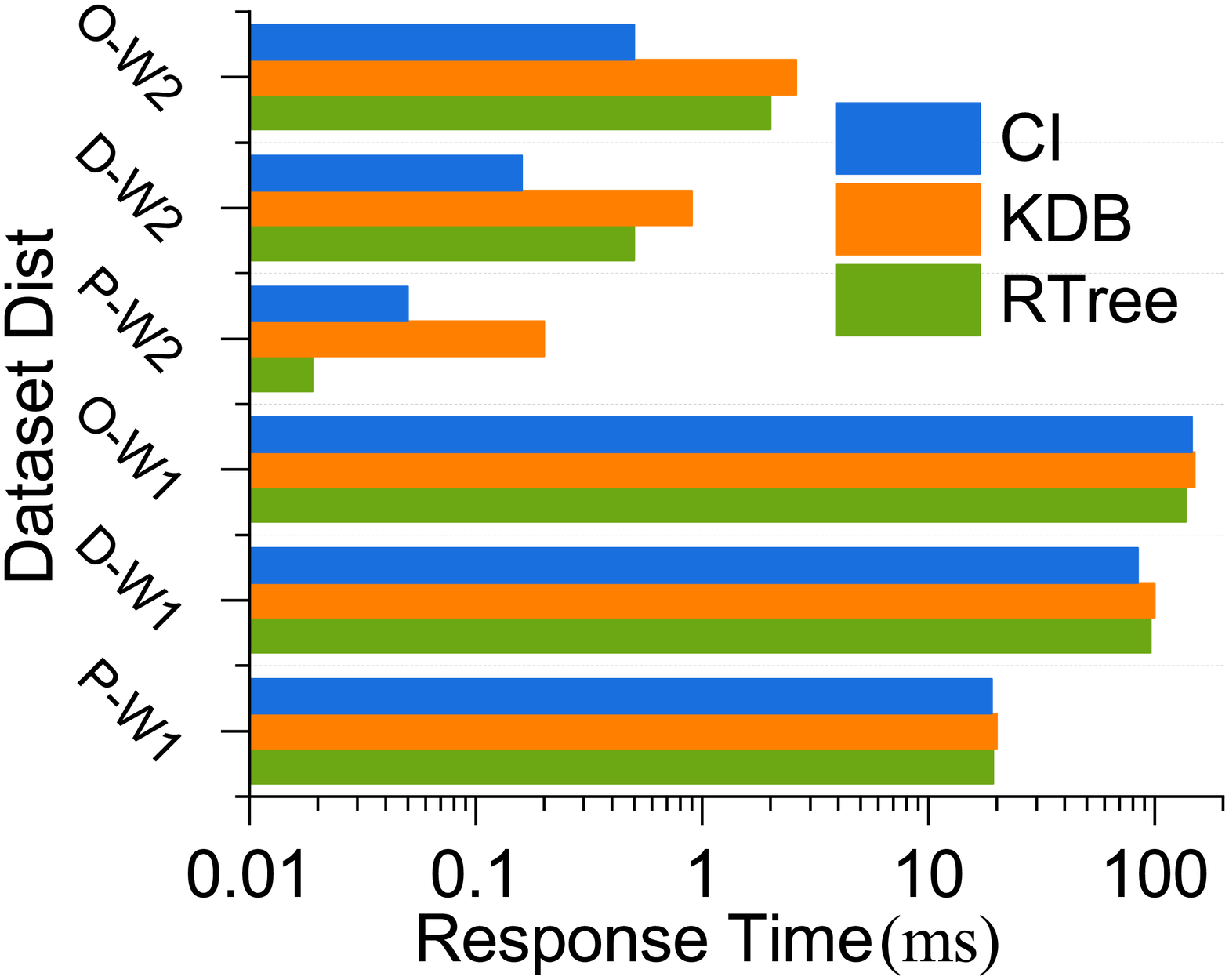  }}
	\subfigure[ Query Scalability ]{
		\label{FigRQ-sca}
		\includegraphics[width=0.45\columnwidth]{./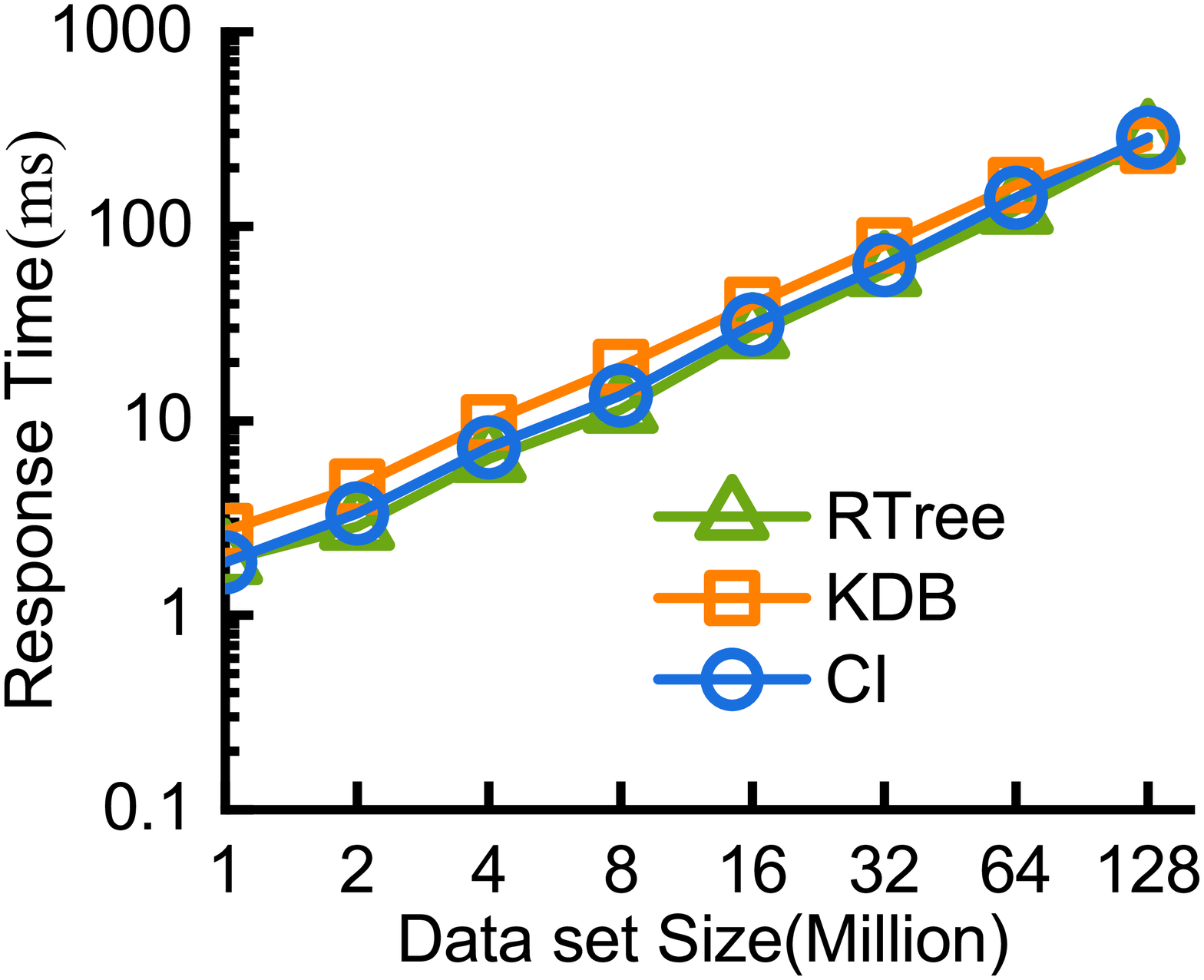}}
	
	\caption{Range Query Details }
	\label{ExpxRQ}
	\vspace{-1.5em}    
\end{figure}

To sum up, our CardIndex is  $ 30\% $-$ 40\% $  faster on point query processing task than SOTA on three real data sets. For range queries, the performance is similar to that of R trees in general cases. And under specific workloads, it can be 4 to 10 times faster than the traditional structure. For range queries, the performance is similar to that of R trees. And under specific workloads, it can be 4 to 10 times faster than the traditional structure

\subsection{Variance Evaluation}

In this section, we evaluate the effects of techniques employed in our approach: pre-$ k $ neuron Linkage and FastCDFEst with a certain depth. We use  Ex-Low workload on the Power dataset.

\begin{figure}[thb ]
	\centering
	\subfigure[  Inference Time]{
		\label{F9A}
		\includegraphics[width=0.45\columnwidth]{./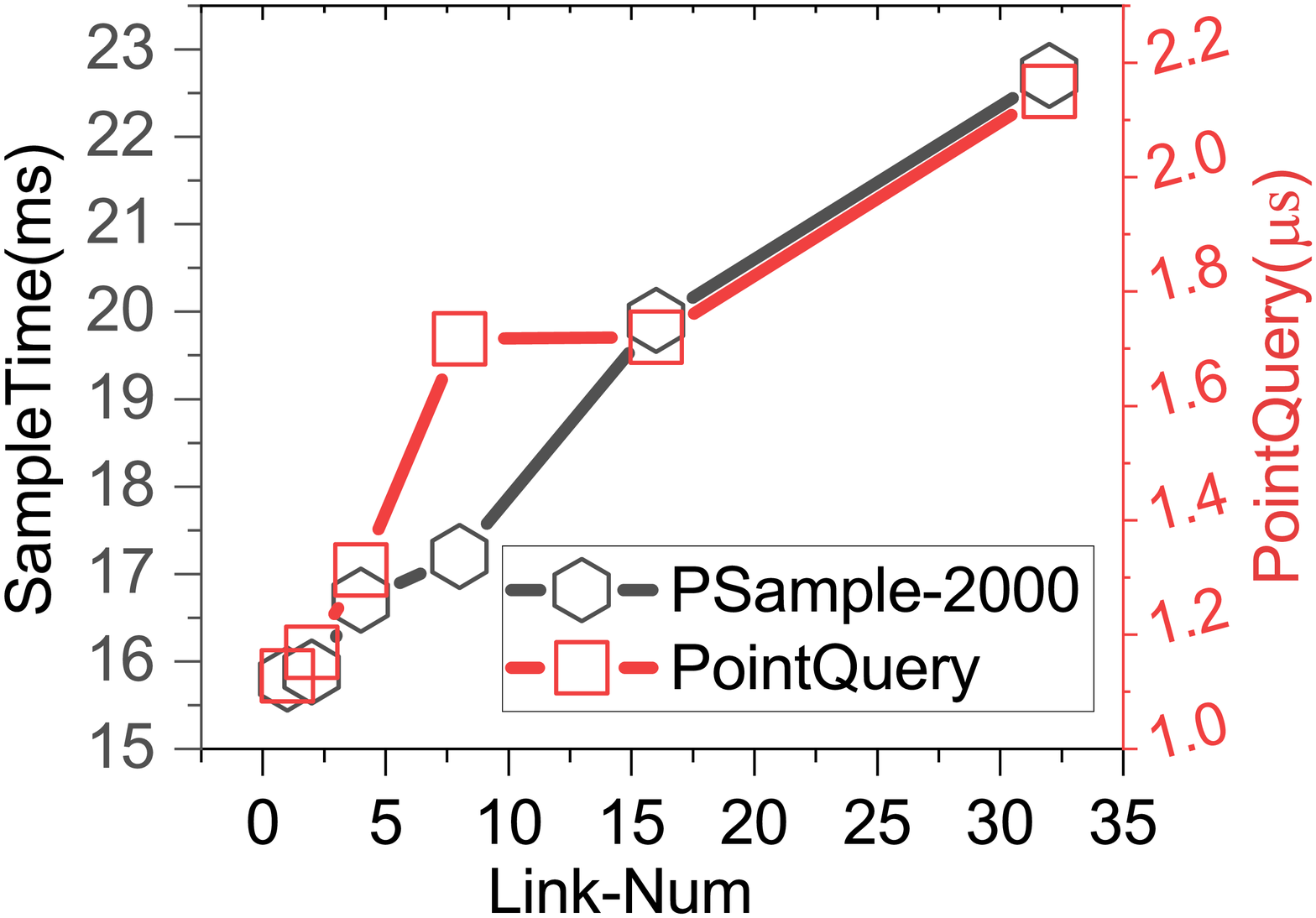 }}
	\subfigure[  top-10\% Q-error]{
		\label{F9b}
		\includegraphics[width=0.45\columnwidth]{./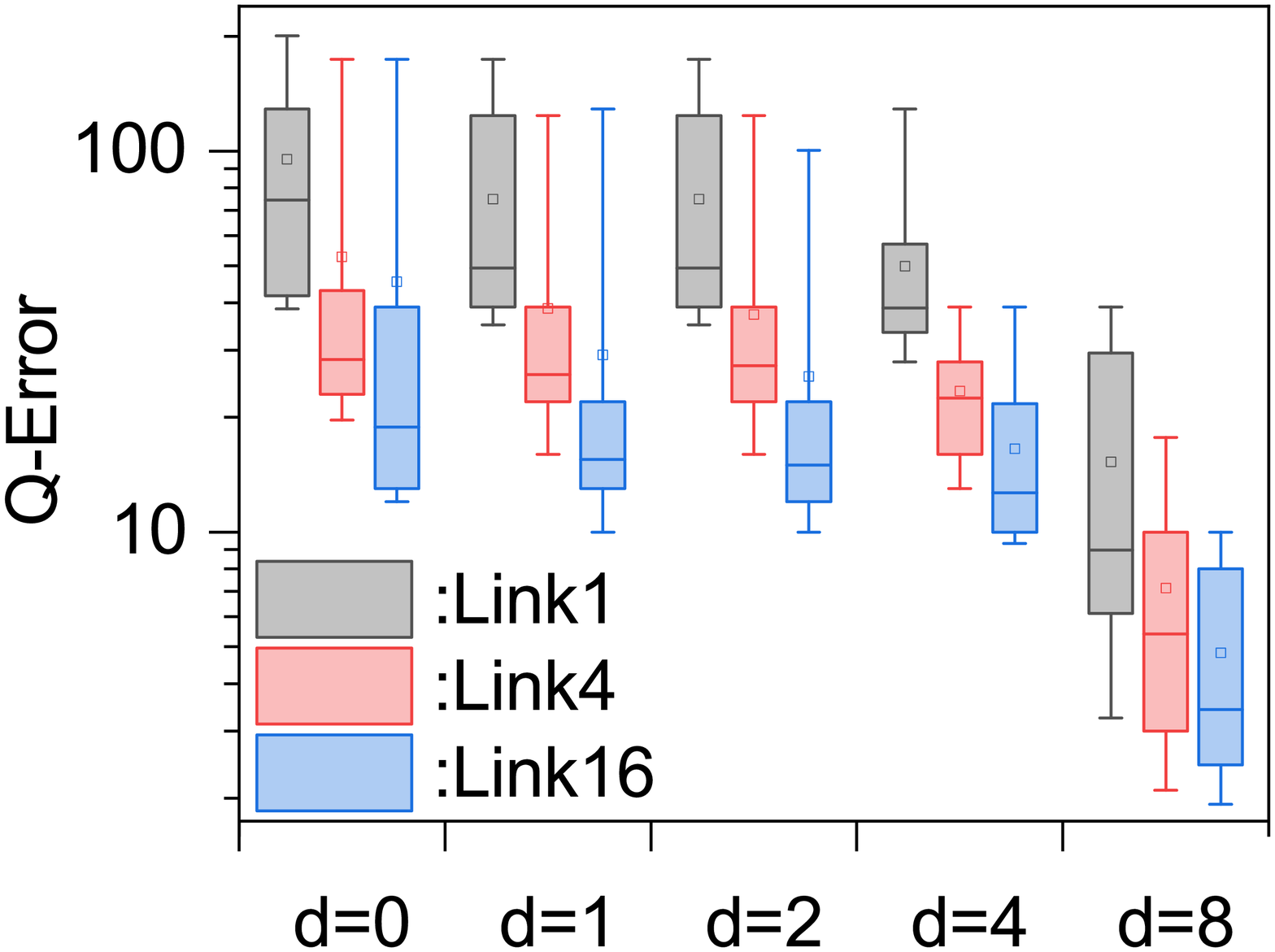 }}
	
	\caption{Link }
	\label{Exp9}
	\vspace{-1.5em}    
\end{figure}

\textbf{Effect On Linkage Number.} In Fig~\ref{F9A},  we vary the Link-Num and report the average estimation time and point Query time of our method. It can be observed that as the number of links increases, more computation is required, so the progressive sampling time and point query time get longer. The benefits brought by increasing the number of connections are intuitive from Fig~\ref{F9b}. When the number of connections is increased, the overall Q-error is reduced in all search depths. This is because,  more connections of neurons shall bring a more  powerful learned network, which makes the result of progressive sampling more accurate.

\textbf{Effect On FastCDFEst Search Depth.}   We also report the effect of  the search depth $ d $ on the quality of the estimation problem in Fig~\ref{F9b}. It is clear that deeper search depth leads to better estimation quality. For example, as search depth $ d $ increases from 0 to 8, the maximum Q-error drops by almost an order of magnitude, even in the case of Link-Number equals 1. The reason is that a deeper CDF search depth will locate small cardinality queries easier. Therefore, more estimation queries are delegated to the index than inaccurate progressive sampling, making Q-error lower. 

\label{sec5.4}
\section{Conclusions}

In this paper, we propose CardIndex, a lightweight multidimensional learned index with cardinality support, which killed two birds with one stone.  These birds that our stone killed, not only in the literal sense, i.e. use a structure to solve both indexing and CE tasks.  It is more in the obstacles we solve. (1) Our CardIndex fixed the duplication distribution storage waste. (2) Our CardIndex solved the dilemma between the lightweight model deployment and corner-case accuracy. Our experiment validated that not only does our method's index performance exceed traditional structures $ 30\%-40\% $ in point query task, 4-10$ \times $  in range query task, but also outperforms the state-of-the-art CE methods by 1.3-114$ \times $   with 1-2 orders of magnitude smaller storage and computational overhead.


\bibliographystyle{abbrv}
\bibliography{reference}  



\end{document}